\documentclass[a4paper, 11pt]{scrartcl}

\usepackage{amssymb}
\usepackage{amsthm}
\usepackage{tikz}
\usepackage{amsmath}
\usepackage{latexsym}
\usepackage{amsfonts}
\usepackage{color}
\usepackage{url}
\usepackage{algorithm}
\usepackage{algpseudocode}
\newenvironment{keywords}%
   {\begin{trivlist}\item[]{ \textit{Keywords}:}\ }
   {\end{trivlist}}

\DeclareSymbolFont{rsfscript}{OMS}{rsfs}{m}{n}
\DeclareSymbolFontAlphabet{\mathrsfs}{rsfscript}

\DeclareMathOperator{\Syn}{Syn}

\DeclareMathOperator{\Ker}{Ker}
\DeclareMathOperator{\M}{M}
\DeclareMathOperator{\End}{End}
\DeclareMathOperator{\Rad}{Rad}

\DeclareMathOperator{\Rnk}{Rk}
\DeclareMathOperator{\rk}{rk}
\DeclareMathOperator{\Fr}{Fr} 
\DeclareMathOperator{\Fs}{FSyn}

\DeclareMathOperator{\supp}{Supp}
\DeclareSymbolFont{rsfscript}{OMS}{rsfs}{m}{n}

\newtheorem{theorem}{Theorem}
\newtheorem{prop}{Proposition}
\newtheorem{defn}{Definition}
\newtheorem{lemma}{Lemma}
\newtheorem{prob}{Open Problem}

\newcommand{\la}{\langle}
\newcommand{\ra}{\rangle}

\hyphenation{de-fi-ni-tion}

\title{A bound for the shortest reset words for semisimple synchronizing automata via the packing number}

\author{Emanuele Rodaro\\
Dipartimento di Matematica\\
  Politecnico di Milano\\
  Piazza Leonardo da Vinci, 32\\
  20133 Milano, Italy}

\date{}

\begin{document}
\maketitle

\begin{abstract}
We show that if a semisimple synchronizing automaton with $n$ states has a minimal reachable non-unary subset of cardinality $r\ge 2$, then there is a reset word of length at most $(n-1)D(2,r,n)$, where $D(2,r,n)$ is the $2$-packing number for families of $r$-subsets of $[1,n]$. 
\end{abstract}

\begin{keywords}
       Synchronizing automaton, Cerny's conjecture, packing number, simple automaton, semisimple automaton, Wedderburn-Artin Theorem
\end{keywords}

\section{Introduction}
An automaton is a tuple $\mathrsfs{A} = \la Q,\Sigma,\delta\ra$, where $Q$ is the set of states, $\Sigma$ is the finite alphabet acting on $Q$, and the function $\delta:Q\times\Sigma\to Q$ describes the action of $\Sigma$ on the set $Q$. More compactly we put $q\cdot a=\delta(q,a)$. This action naturally extends to $\Sigma^{*}$ and to the subsets of $Q$ in the obvious way. Automata are mostly used in theoretical computer science as languages recognizers, see for instance \cite{hop, Perrin}. However, the interested of such objects from their dynamical point of view is mostly motivated by the longstanding Cerny's conjecture regarding the class of synchronizing automata. If the automaton $\mathrsfs{A} $ has a word $u\in\Sigma^{*}$ sending all the states to a unique one, i.e., $q\cdot u=p\cdot u$ for all $q,p\in Q$, then $\mathrsfs{A}$ is called \emph{synchronizing} (or reset) and the word $u$ is called reset (or synchronizing). Cerny's conjecture states that a synchronizing automaton with $n$ states has always a reset word of length at most $(n-1)^{2}$, see \cite{Ce64}. In \cite{Ce64} it is also shown that this bound is tight by exhibiting an infinite series of synchronizing automata $\mathrsfs{C}_{n}$ having a shortest synchronizing word of length $(n-1)^{2}$. For more information
on synchronizing automata we refer the reader to Volkov's survey \cite{Vo_Survey}. The literature around Cerny's conjecture and synchronizing automata is vast and span from the algorithmic point of view to the proof of Cerny's conjecture or the existence of quadratic bounds on the smallest reset word for several classes of automata, see for instance \cite{AR16, AnVo, BeSz, BeBePe, Dubuc, GreKi, Kari, Ste11, Trah, VoChain}. The best upper bound for the shortest reset word is cubic $(n^{3}-n)/6$ obtained by Pin-Frankl \cite{Pin} and recently asymptotically improved to roughly $O(114n^{3}/685)$ by Szykula  \cite{Szy17}. 
In this paper we follow a representation theoretic approach to synchronizing automata initially pursued in \cite{AMSV09,AS09,AS07,Dubuc,Ste11}, but from a more ring theoretic point of view as followed in \cite{AR16}. We provide a new upper bound on the shortest reset word for the quite broad class of semisimple synchronizing automata. This class contains the natural class of simple synchronizing automata, i.e., automata without non-trivial congruences. 
This bound depends on the notion of former-rank of the automaton $\mathrsfs{A}$ which is the smallest reachable subset $H$ of $Q$ with $|H|>1$. The tool used is the Wedderburn-Artin theorem together with the notion of the $t$-packing number $D(t,r,n)$, i.e., the maximum size of a collection of $r$-subsets of $[1,n]$ such that no $t$-subset is covered more than once. Our main result is the following. 
\begin{theorem}
If an automaton $\mathrsfs{A}$ with $n$ states is semisimple with former-rank $r\ge 2$, then there is a reset word of length at most $(n-1)D(2,r,n)$. In particular, we have that there is a reset word of length at most
$$
\frac{n(n-1)^{2}}{r(r-1)}
$$
\end{theorem}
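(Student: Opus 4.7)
The plan is to construct the reset word by iteratively shrinking the current reachable subset, using the Wedderburn--Artin decomposition of the transition algebra to bound each shrinking step by the $2$-packing number $D(2, r, n)$.

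Fix a reachable subset $H \subseteq Q$ with $|H| = r$ witnessing the former-rank, and let $\Omega$ denote the orbit of $H$ under the monoid action restricted to images of cardinality $r$. By minimality of the former-rank, every letter sends an element of $\Omega$ either to another element of $\Omega$ or to a singleton.

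The structural heart of the proof is the claim that $\Omega$ is a $2$-packing: for any distinct $X, Y \in \Omega$, one has $|X \cap Y| \le 1$, whence $|\Omega| \le D(2, r, n)$. Here the semisimplicity hypothesis enters decisively. I would apply the Wedderburn--Artin theorem to the transition algebra $\mathbb{F} \Syn(\mathrsfs{A})$, which then decomposes as a direct product of matrix algebras over division rings. Assuming for contradiction that $|X \cap Y| \ge 2$ for some distinct $X, Y \in \Omega$, I would decompose the indicator vectors $\mathbf{1}_X$ and $\mathbf{1}_Y$ along the simple direct summands, and combine them through the central idempotents of this decomposition to manufacture the indicator of a reachable non-singleton subset of cardinality strictly less than $r$, contradicting the definition of the former-rank.

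Granted $|\Omega| \le D(2, r, n)$, the reset word is then built by a greedy descent. I would argue that for any reachable subset $S$ with $|S| \ge 2$, there exists a word $u$ of length at most $D(2, r, n)$ with $|S \cdot u| < |S|$. When $|S| = r$, this is immediate: the orbit of $S$ inside $\Omega$ has size at most $D(2, r, n)$, and a shortest word taking $S$ to a singleton (which exists since $\mathrsfs{A}$ is synchronizing) traverses only pairwise distinct elements of $\Omega$, giving a length bound of $D(2, r, n)$. When $|S| > r$, a similar pigeonhole-plus-$2$-packing argument applied to the images of $S$ of cardinality $|S|$ shows that a strict reduction occurs within $D(2, r, n)$ letters. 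Concatenating at most $n - 1$ such descents (needed to drop the cardinality from $n$ to $1$) yields a reset word of length at most $(n-1) D(2, r, n)$. The closed-form bound $\frac{n(n-1)^{2}}{r(r-1)}$ then follows from the classical inequality $D(2, r, n) \le \frac{n(n-1)}{r(r-1)}$ obtained by double-counting the $\binom{r}{2}$ pairs in each $r$-subset against the $\binom{n}{2}$ pairs available in $Q$.

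The principal obstacle is the $2$-packing claim for $\Omega$: turning the abstract Wedderburn--Artin decomposition into the combinatorial intersection bound requires careful tracking of the indicator vectors across the simple components, and is where the semisimplicity hypothesis is genuinely used. Once this is established, the remaining parts -- bounding the length of each descent by $D(2, r, n)$ via pigeonhole and summing over at most $n - 1$ descents -- are essentially routine.
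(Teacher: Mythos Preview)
Your proposal has two genuine gaps, and the overall strategy is not the one the paper follows.

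First, the central claim --- that the orbit $\Omega$ of reachable $r$-subsets is itself a $2$-packing --- is not established, and the sketched argument via central idempotents cannot work as written. Applying a central idempotent of the Wedderburn--Artin decomposition to an indicator vector $\mathbf{1}_X$ does not produce the indicator of any subset of $Q$, let alone a \emph{reachable} one, so you have no mechanism to ``manufacture the indicator of a reachable non-singleton subset of cardinality strictly less than $r$'' from $\mathbf{1}_X$ and $\mathbf{1}_Y$. The paper never claims that all reachable $r$-subsets form a $2$-packing. What it proves (Proposition~\ref{prop: packing number}) is considerably weaker and purely by construction: one defines a right-congruence $\sim_i$ on the $0$-minimal ideal $\mathcal{I}_i$ as the transitive closure of the relation ``two elements admit $u$-representatives whose images meet in at least two points''; then representatives of \emph{distinct} $\sim_i$-classes have images forming a $2$-packing tautologically, which bounds the number of classes by $D(2,r_i,n)+1$. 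Semisimplicity enters only afterwards (Theorem~\ref{theo: zero component}), and in a different way than you suggest: a word $u$-representing $0_i$ lies in $\Rad(\mathrsfs{A})=\Syn(\mathrsfs{A})$, hence has singleton image, so nothing non-zero can be $\sigma_i$-related to $0_i$; this collapses the class $[0_i]_{\sim_i}$ to $\{0_i\}$.

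Second, your greedy descent for $|S|>r$ is unsupported. The $2$-packing bound concerns families of $r$-subsets; nothing in your argument controls the number of reachable $|S|$-subsets, and ``a similar pigeonhole-plus-$2$-packing argument'' has no content here. The paper avoids subset-by-subset descent entirely. Having produced, for each simple factor $i$, a word with $\theta_i(w)=0_i$ of length at most $n_i D(2,r_i,n)$ (Proposition~\ref{prop: single component} and Theorem~\ref{theo: zero component}), it concatenates such words along a Jordan--H\"older filtration of $\mathbb{C}^{n-1}$ (Lemma~\ref{lem: null ideal}). The factor $(n-1)$ in the final bound arises from the dimension estimate $\sum_j n_{i_j}\le n-1$ for that filtration, not from counting $n-1$ shrinking steps on $Q$.
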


\section{The Wedderburn-Artin point of view}\label{sec: factoring}
In this section we fix the notation and we recall some basic facts that will be used throughout all the paper. The notation introduced and the considered approach strictly follow the one introduced in \cite{AR16}.
Henceforth, we consider a synchronizing automaton $\mathrsfs{A}=\la Q,\Sigma,\delta\ra$ with set of $n$ states $Q=\{q_1,\ldots, q_n\}$, and by $\mathcal{S}$ (sometimes also $\Syn(\mathrsfs{A})$) we denote the set of the synchronizing (or reset) words of $\mathrsfs{A}$. It is a well known fact that this set is a two-sided ideal of $\Sigma^{*}$, i.e.,  $\Sigma^{*}\mathcal{S}\Sigma^{*}\subseteq \mathcal{S}$. By $\M(\mathrsfs{A})$ we denote the transition monoid of $\mathrsfs{A}$ and by $\pi:\Sigma^*\rightarrow \M(\mathrsfs{A})$  we denote the associated natural epimorphism. Put $\mathrsfs{A}^*=\M(\mathrsfs{A})/\pi(\mathcal{S})$. There is a natural and well known action of $\M(\mathrsfs{A})$ on the set $Q$ given by $q\cdot
\pi(u)=\delta(q,u)$; we often omit the map $\pi$ and we use the simpler notation $q\cdot u$. This action extends to the subsets of $Q$ in the obvious way. By this action, $\M(\mathrsfs{A})$ embeds into the ring $\mathbb{M}_n(\mathbb{C})$ of $n\times n$ matrices with entries in $\mathbb{C}$ and with a slight abuse of notation we still denote by $\pi:\Sigma^*\rightarrow \mathbb{M}_n(\mathbb{C})$ the representation induced by this embedding. This representation determines an action of $\Sigma^*$ on
the vector space $\mathbb{C}Q$ defined by $v\cdot u= v\pi(u)$. Consider the vector $w=q_1+\cdots + q_n$ formed by summing all the elements of the canonical basis. It is a well known fact in the literature, that $\Sigma^*$ acts on the orthogonal space $w^{\perp}=\{u\in \mathbb{C}Q: \la u|w\ra=0\}$, and $u\in\mathcal{S}$ if and only if for every $v\in w^{\perp}$ we get $v\cdot u=0$ (see for instance \cite{AS07}). This induces a representation $\varphi:\Sigma^*/\mathcal{S}\rightarrow \End(w^{\perp})\simeq \mathbb{M}_{n-1}(\mathbb{C})$ with $\varphi(\Sigma^*/\mathcal{S})\simeq \mathrsfs{A}^*$. Thus, $\mathrsfs{A}^*$ may be seen as a finite multiplicative submonoid of $\mathbb{M}_{n-1}(\mathbb{C})$.
We now consider the $\mathbb{C}$-subalgebra $\mathcal{R}$ of $\mathbb{M}_{n-1}(\mathbb{C})$ generated by $\mathrsfs{A}^*$. Clearly $\mathcal{R}$ is a finitely generated
$\mathbb{C}$-algebra. Since $\mathrsfs{A}^*$ embeds into $\mathcal{R}$, with a slight abuse of notation we identify $\mathrsfs{A}^*$ with the image of this embedding $\mathrsfs{A}^*\hookrightarrow \mathcal{R}$. Therefore, the \emph{radical} $\Rad(\mathrsfs{A}^*)$ of $\mathrsfs{A}^*$ is defined by
$$
\Rad(\mathrsfs{A}^*)=\Rad(\mathcal{R})\cap
\mathrsfs{A}^*
$$
where $\Rad(\mathcal{R})$ is the radical (see \cite{Lam}) of the $\mathbb{C}$-subalgebra $\mathcal{R}$. Throughout the paper we consider the morphism 
$$
\rho:\Sigma^*\rightarrow \mathrsfs{A}^*
$$ 
that is the composition of the Rees morphism $\Sigma^*\rightarrow\Sigma^*/\mathcal{S}$ with the representation map $\varphi$. Using this map we may define the set of the radical words of the automaton $\mathrsfs{A}$ as the following ideal of $\Sigma^{*}$:
$$
\Rad(\mathrsfs{A})=\rho^{-1}(\Rad(\mathrsfs{A}^*))
$$
This is an ideal containing $\mathcal{S}$, moreover $\Rad(\mathrsfs{A})/\mathcal{S}$ is the largest nilpotent left (right) ideal of $\Sigma^*/\mathcal{S}$, see \cite{AR16}. The importance of radical words stem from the fact that if one is able to find a radical word $u$, then a synchronizing word may be obtained by considering a suitable power of $u$. Indeed, for any $u\in \Rad(\mathrsfs{A})$ it is easy to check that $u^{n-1}$ is reset. Actually, if one is able to find short reset words, then it is also possible to find a reset word of quadratic bound.
\begin{prop}\cite{AR16}
If it is true that for any strongly connected synchronizing automaton with $n$ states there is a radical word of length at most $(n-1)^{2}$, then for any strongly connected synchronizing automata with $n$ states there is a synchronizing word $u$ with $|u|\le 2(n-1)^{2}$.
\end{prop}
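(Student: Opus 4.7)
The plan is to derive a reset word by concatenating two words, each of length at most $(n-1)^{2}$, produced by applying the hypothesis to a strongly connected synchronizing automaton with at most $n$ states.

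Let $\mathrsfs{A}=\la Q,\Sigma,\delta\ra$ be strongly connected and synchronizing with $n$ states. The hypothesis produces a radical word $u\in\Rad(\mathrsfs{A})$ with $|u|\le (n-1)^{2}$. The first step is to observe that $u$ strictly reduces rank: since $\rho(u)\in\Rad(\mathrsfs{A}^{*})$ is represented as a nilpotent operator on $w^{\perp}$, any rank-preserving step $|Q\cdot u^{k}|=|Q\cdot u^{k+1}|$ would force $u$ to act as a permutation on $Q\cdot u^{k}$, which is incompatible with $u^{n-1}\in\mathcal{S}$ being reset. Consequently $|Q\cdot u|\le n-1$, and it suffices to exhibit a word $v\in\Sigma^{*}$ with $|v|\le (n-1)^{2}$ that synchronizes $S:=Q\cdot u$; the concatenation $uv$ will then reset $\mathrsfs{A}$ within the required $2(n-1)^{2}$.

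For the construction of $v$, the plan is to transfer the problem to an auxiliary strongly connected synchronizing automaton $\mathrsfs{B}$ over the alphabet $\Sigma$ whose state set is canonically associated with $S$. A natural candidate is a congruence quotient of $\mathrsfs{A}$ that collapses $Q\setminus S$ to a single class, or that identifies states according to the kernel of $\rho(u)$, arranged so that $\mathrsfs{B}$ has at most $n$ states, inherits strong connectivity and synchronization from $\mathrsfs{A}$, and admits a length-preserving lift to $\Sigma^{*}$ under which synchronizing words of $\mathrsfs{B}$ become words collapsing $S$ in $\mathrsfs{A}$. Applying the hypothesis to $\mathrsfs{B}$ yields a radical word of length $\le (n-1)^{2}$; to upgrade this to an actual synchronizer of the same length one invokes strong induction on $n$, so that the proposition, assumed for all $n'<n$, applies to $\mathrsfs{B}$ and returns a reset word of length $\le 2(n_{\mathrsfs{B}}-1)^{2}$ which, for $n_{\mathrsfs{B}}$ sufficiently smaller than $n$, fits inside $(n-1)^{2}$; the base case $n_{\mathrsfs{B}}\le 2$ is immediate.

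The main technical obstacle is the construction and verification of $\mathrsfs{B}$: one must produce a quotient of $\mathrsfs{A}$ that simultaneously (i) remains strongly connected, (ii) remains synchronizing, (iii) has at most $n$ (and in fact strictly fewer) states, and (iv) enjoys a length-preserving lift whose image collapses $S$ in $\mathrsfs{A}$. Once such a $\mathrsfs{B}$ is exhibited and the induction is set up to close the gap between the radical bound $(n-1)^{2}$ and the reset bound $2(n-1)^{2}$ on the smaller instance, the two applications of the hypothesis combine by concatenation to give the claimed reset word of length at most $2(n-1)^{2}$.
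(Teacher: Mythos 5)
The paper cites this proposition from \cite{AR16} without reproducing the proof, so there is no in-text argument to compare against; your proposal must stand on its own, and it does not yet do so. Your high-level plan (concatenate two words, one coming from an auxiliary quotient automaton) is the right kind of idea, but three concrete things go wrong.

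First, the composition is set up backwards relative to what the paper's own machinery supports. You want a word $v$ with $|S\cdot v|=1$ for $S=Q\cdot u$, so that $uv$ resets. But if $\mathrsfs{B}=\mathrsfs{A}/\sigma$ is a quotient by a congruence $\sigma\subseteq\Ker(u)$ (the only kind of congruence the theory hands you, via the lemma in Section~2.1), a reset word $v$ for $\mathrsfs{B}$ sends $Q$ into a single $\sigma$-class, which is contained in a single fiber of $u$; the word that therefore resets $\mathrsfs{A}$ is $vu$, not $uv$. A reset word for $\mathrsfs{B}$ does not in general collapse $S=Q\cdot u$, so the link you posit between $\mathrsfs{B}$ and $S$ is not there.

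Second, neither candidate you name for $\mathrsfs{B}$ is actually an automaton-congruence quotient: collapsing $Q\setminus S$ to a point is not a congruence, and $\Ker(u)$ itself is not a congruence in general. The tool that does exist is the lemma from \cite{AR16} recalled in Section~2.1: if $\mathrsfs{A}$ is \emph{not} semisimple and $u\in\Rad(\mathrsfs{A})\setminus\mathcal{S}$, there is a non-trivial congruence $\sigma\subseteq\Ker(u)$. You never invoke this, nor do you split into the semisimple case (where $u$ is already reset and the bound $(n-1)^2\le 2(n-1)^2$ is immediate) versus the non-semisimple case (where the lemma applies).

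Third, and most importantly, the induction you sketch does not close, and you essentially admit this. If $\mathrsfs{B}=\mathrsfs{A}/\sigma$ has $n'$ states with $n'<n$, the inductive hypothesis gives a reset word for $\mathrsfs{B}$ of length at most $2(n'-1)^2$, and the total $2(n'-1)^2+(n-1)^2$ exceeds $2(n-1)^2$ exactly when $2(n'-1)^2>(n-1)^2$; for the perfectly possible case $n'=n-1$ this fails for all $n\ge 5$. Saying it works ``for $n_{\mathrsfs{B}}$ sufficiently smaller than $n$'' names the gap rather than filling it: you would need either a reason the quotient drops by a constant factor, or a different argument producing a reset word for $\mathrsfs{B}$ of length at most $(n-1)^2$ rather than $2(n_{\mathrsfs{B}}-1)^2$. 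As written, the proposal is a reasonable plan of attack with the key step missing.
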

Finding ``short'' radical words might be an easier task than finding short reset words, thus this problem may be an intermediate step to tackle Cerny's conjecture. This statement is justified by the nice representation of the ring $\overline{\mathcal{R}}=\mathcal{R}/\Rad(\mathcal{R})$ due to the Wedderburn-Artin Theorem. Since $\overline{\mathcal{R}}$ is semisimple,  $\overline{\mathcal{R}}$ may be factorized into $k$ simple components:
\begin{equation}\label{eq: Wedderburn-Artin decomposition}
\overline{\mathcal{R}}\simeq \mathbb{M}_{n_1}(\mathbb{C}_1)\times\ldots\times \mathbb{M}_{n_k}(\mathbb{C}_k)
\end{equation}
for some (uniquely determined) positive integers $n_1,\ldots, n_k$. Let $\varphi_i:\overline{\mathcal{R}}\rightarrow \mathbb{M}_{n_i}(\mathbb{C})$, for $i\in [1,k]=\{1,\ldots, k\}$, be the projection map onto the $i$-th simple component, and let $\psi:\mathcal{R}\rightarrow \overline{\mathcal{R}}$ be the canonical epimorphism. For each $i\in[1,k]$ we defined the following frequently used morphism
$$
\theta_i= \varphi_i\circ \psi\circ \rho:\Sigma^{*}\to \mathbb{M}_{n_i}(\mathbb{C})
$$ 
Note that a radical word $u\in\Rad(\mathrsfs{A})$ may be characterized by the property $\theta_{i}(u)=0_{i}$ for each $i\in [1,k]$, where $0_{i}$ is the zero of $\mathbb{M}_{n_i}(\mathbb{C})$. We now recall a notion that plays an important role in this paper. Let $\mathcal{M}_i=\theta_i( \Sigma^*)$ be the subsemigroup of $\mathbb{M}_{n_i}(\mathbb{C})$ generated by $\Sigma^{*}$, $i\in [1,k]$. We call $\mathcal{M}_i$ the $i$-th factor monoid. The following lemma holds.
\begin{lemma}\cite{AR16}\label{lem: 0-simple semigroup}
  The $i$-th factor monoid $\mathcal{M}_i$ has a unique $0$-minimal ideal $\mathcal{I}_i$ which is a $0$-simple semigroup. Furthermore, $\mathcal{M}_i$ acts faithfully on both left and right of $\mathcal{I}_i$. 
\end{lemma}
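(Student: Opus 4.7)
The plan is to exploit the fact that the finite monoid $\mathcal{M}_i$ generates the simple matrix algebra $\mathbb{M}_{n_i}(\mathbb{C})$ as a $\mathbb{C}$-algebra. For the setup, note that $\mathcal{R}$ is generated as a $\mathbb{C}$-algebra by $\mathrsfs{A}^{*}$ and that $\varphi_{i}\circ\psi$ is a surjective algebra homomorphism onto $\mathbb{M}_{n_i}(\mathbb{C})$; hence $\mathcal{M}_i=(\varphi_{i}\circ\psi)(\mathrsfs{A}^{*})=\theta_{i}(\Sigma^{*})$ generates $\mathbb{M}_{n_i}(\mathbb{C})$. Since $\mathcal{M}_i$ is closed under multiplication and already contains the identity (the image of the empty word) as well as $0$ (the image of any reset word, which exists by the synchronizability of $\mathrsfs{A}$), its $\mathbb{C}$-linear span satisfies $\mathbb{C}\mathcal{M}_i=\mathbb{M}_{n_i}(\mathbb{C})$; and $\mathcal{M}_i$ is finite, being a homomorphic image of the finite monoid $\mathrsfs{A}^{*}$.

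The central step is the following \emph{span lemma}: for every nonzero two-sided ideal $I$ of $\mathcal{M}_i$, one has $\mathbb{C}I=\mathbb{M}_{n_i}(\mathbb{C})$. Indeed $\mathbb{C}I$ is a two-sided ideal of $\mathbb{C}\mathcal{M}_i=\mathbb{M}_{n_i}(\mathbb{C})$, which is simple, so $\mathbb{C}I$ equals the whole algebra. From this the three required properties follow. For existence of a $0$-minimal ideal $\mathcal{I}_i$, I would use finiteness of $\mathcal{M}_i$ and pick any nonzero ideal of smallest cardinality. For uniqueness, if $I_{1},I_{2}$ were two $0$-minimal ideals, the span lemma forces $\mathbb{C}(I_{1}I_{2})=\mathbb{C}I_{1}\cdot\mathbb{C}I_{2}=\mathbb{M}_{n_i}(\mathbb{C})\neq 0$, so $I_{1}I_{2}\neq\{0\}$; since $I_{1}I_{2}\subseteq I_{1}\cap I_{2}$, $0$-minimality of both $I_{1}$ and $I_{2}$ forces $I_{1}\cap I_{2}=I_{1}=I_{2}$. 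For $0$-simplicity, the same argument gives $\mathcal{I}_i^{2}\neq\{0\}$, and one invokes the classical dichotomy that a $0$-minimal ideal of a semigroup with zero is either null or $0$-simple.

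Finally, to establish faithfulness of the two-sided action, assume $M_{1}N=M_{2}N$ for every $N\in\mathcal{I}_i$ with $M_{1},M_{2}\in\mathcal{M}_i$. Then $(M_{1}-M_{2})N=0$ for every such $N$, and by $\mathbb{C}$-linearity $(M_{1}-M_{2})X=0$ for every $X\in\mathbb{C}\mathcal{I}_i=\mathbb{M}_{n_i}(\mathbb{C})$; evaluating at $X$ equal to the identity matrix yields $M_{1}=M_{2}$, and the symmetric argument handles the right action. The main conceptual obstacle of the whole proof is the span lemma, which hinges on the simplicity of $\mathbb{M}_{n_i}(\mathbb{C})$; once that is in hand, existence, uniqueness, $0$-simplicity, and faithfulness all reduce to short formal manipulations.
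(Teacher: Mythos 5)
Your argument is correct. The paper states this lemma by citation to \cite{AR16} rather than reproducing a proof, so there is no in-text argument to compare against; but your ``span lemma'' (that any nonzero two-sided semigroup ideal of $\mathcal{M}_i$ has $\mathbb{C}$-linear span equal to all of $\mathbb{M}_{n_i}(\mathbb{C})$, because that span is a nonzero two-sided algebra ideal of the simple algebra $\mathbb{C}\mathcal{M}_i=\mathbb{M}_{n_i}(\mathbb{C})$) is exactly the ring-theoretic mechanism the Wedderburn--Artin framework is set up to provide, and the deductions of existence (finiteness), uniqueness ($I_1I_2\subseteq I_1\cap I_2$ is nonzero), $0$-simplicity ($\mathcal{I}_i^2\neq\{0\}$ plus the standard null/$0$-simple dichotomy), and faithfulness (the annihilator of $\mathcal{I}_i$ equals the annihilator of its span, hence is zero) all go through. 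The only points worth stating explicitly, which you do handle, are that $\mathcal{M}_i$ contains the zero matrix because $\mathrsfs{A}$ is synchronizing, and that $\mathcal{M}_i\neq\{0_i\}$ because $1_i\neq 0_i$.
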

For a word $u\in\Sigma^*$, the \emph{rank} of $u$ is $\rk(u)=|Q\cdot u|$. We recall that for any $u,v\in\Sigma^{*}$ $\rk(uv)\le \min\{\rk(u), \rk(v)\}$ holds. The notion of rank may be extended to elements in $\mathcal{I}_i\setminus\{0\}$ by defining for any $g\in \mathcal{I}_i\setminus\{0\}$ the $i$-th rank of $g$ as the following integer
$$
  \Rnk_i(g)=\min\{\rk(u):\theta_i(u)=g\}
$$
and by extension we put 
$$
  \Rnk(\mathcal{I}_i)=\min\{\Rnk_i(g):g\in\mathcal{I}_i\setminus\{0\}\}
$$
The $i$-th rank is the same for all the non-zero elements in the unique $0$-minimal ideal $\mathcal{I}_i$.
\begin{lemma}\cite{AR16}\label{lem: constant rank}
For any $g\in \mathcal{I}_i\setminus\{0\}$ we have $\Rnk_i(g)=\Rnk(\mathcal{I}_i)$.
\end{lemma}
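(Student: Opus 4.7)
The plan is to show that any two non-zero elements of $\mathcal{I}_i$ have the same $i$-th rank, from which the statement follows by taking the minimum. The two ingredients I would use are the $0$-simplicity of $\mathcal{I}_i$ granted by Lemma~\ref{lem: 0-simple semigroup} and the sub-multiplicativity of the rank function $\rk(xy)\le\min\{\rk(x),\rk(y)\}$ recalled just before the statement.

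First I would fix two non-zero elements $g,h\in\mathcal{I}_i$. Since $\mathcal{I}_i$ is $0$-simple, the principal two-sided ideal of $\mathcal{I}_i$ generated by $g$ is non-zero (it contains $g$), so it must coincide with all of $\mathcal{I}_i$; in particular $h=agb$ for some $a,b\in\mathcal{M}_i$, where a trivial left or right multiplier is covered by the identity $1_i=\theta_i(\epsilon)\in\mathcal{M}_i$. Using the surjectivity of $\theta_i:\Sigma^{*}\to\mathcal{M}_i$, I would lift these factors to words $v,w\in\Sigma^{*}$ with $\theta_i(v)=a$ and $\theta_i(w)=b$. Then I would pick $u\in\Sigma^{*}$ attaining the minimum in the definition of $\Rnk_i(g)$, so that $\theta_i(u)=g$ and $\rk(u)=\Rnk_i(g)$. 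The word $vuw$ satisfies $\theta_i(vuw)=agb=h$, and rank sub-multiplicativity gives
$$
\Rnk_i(h)\;\le\;\rk(vuw)\;\le\;\rk(u)\;=\;\Rnk_i(g).
$$
Exchanging the roles of $g$ and $h$ yields the reverse inequality, hence $\Rnk_i(g)=\Rnk_i(h)$. Because this equality holds for every pair of non-zero elements of $\mathcal{I}_i$, the common value must coincide with $\Rnk(\mathcal{I}_i)=\min\{\Rnk_i(g'):g'\in\mathcal{I}_i\setminus\{0\}\}$, which is exactly the claim.

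The only slightly delicate point is the translation from the abstract $0$-simplicity statement to a genuine factorisation $h=agb$ inside $\mathcal{M}_i$: one has to keep track of the case where $a$ or $b$ should be the identity. This is harmless because $\mathcal{M}_i=\theta_i(\Sigma^{*})$ is a monoid containing $1_i=\theta_i(\epsilon)$, so the empty word provides the required pre-image. No extremal combinatorics or quantitative estimate is needed, so the argument reduces to a short semigroup-theoretic manipulation and presents no real obstacle.
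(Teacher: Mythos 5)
Your argument is correct, and it is the natural proof of this fact. The paper cites this lemma from \cite{AR16} without reproducing a proof, so there is no in-text argument to compare against; but the route you take — first establishing that all non-zero elements of $\mathcal{I}_i$ share the same $i$-th rank, then observing that this common value must equal the minimum — is exactly the standard one, and it is almost surely what appears in \cite{AR16}. The two ingredients you invoke are the right ones: $0$-simplicity of $\mathcal{I}_i$ (Lemma~\ref{lem: 0-simple semigroup}) gives the factorisation $h=agb$ with $a,b\in\mathcal{I}_i^1\subseteq\mathcal{M}_i$, and sub-multiplicativity of rank under composition gives $\Rnk_i(h)\le\rk(vuw)\le\rk(u)=\Rnk_i(g)$, with the reverse inequality by symmetry. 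Your handling of the possibly trivial multipliers $a,b$ via $1_i=\theta_i(\epsilon)$ is also sound, since $\mathcal{M}_i$ is a monoid and $\theta_i$ is surjective onto it, so the lifts $v,w\in\Sigma^*$ always exist. One small stylistic remark: when you invoke $\mathcal{I}_i^1 g\,\mathcal{I}_i^1=\mathcal{I}_i$, the adjoined identity of $\mathcal{I}_i^1$ need not literally be $1_i$, but the four cases $h\in\{g\}\cup g\mathcal{I}_i\cup\mathcal{I}_i g\cup\mathcal{I}_i g\mathcal{I}_i$ all reduce to $h=agb$ with $a,b\in\mathcal{M}_i$, which is what your argument actually uses; so the conclusion stands.
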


\subsection{Simple and semisimple synchronizing automata}
An automaton-congruence (or simply a congruence) is an equivalence relation $\sigma$ on the set of states $Q$ such that $q\sigma p$ implies that $(q\cdot u) \sigma (p\cdot u)$ for all $u\in\Sigma^{*}$. The set of congruences forms a lattice with maximum the universal relation, and minimum the identity relation. In case this lattice is formed by just these two congruences the automaton is called \emph{simple}, see for instance \cite{AR16,Bab,Ry15,Thie}. For example, automata having some letters acting like a primitive group, the Cerny's automata $\mathrsfs{C}_{n}$, or some of the ``slowly synchronized'' automata $\mathrsfs{W}_{n},\mathrsfs{D}'_{n}$ considered in \cite{AVG12}, are all simple \cite{AR16}. Moreover, in a possible proof of Cerny's conjecture by induction on the number of states, simple synchronizing automata would constitutes the base case. Simple synchronizing automaton are framed nicely in the approach we are considering. We say that a synchronizing automaton $\mathrsfs{A}$ is  \emph{semisimple} whenever $\Rad(\mathrsfs{A}^*)=\{0\}$ \cite{AR16}. Equivalently, $\mathrsfs{A}$ is semisimple if and only if $\Rad(\mathrsfs{A})=\mathcal{S}$. The following result nicely frames the simple class in the class of semisimple.   
\begin{theorem}\cite{AR16}
  A synchronizing simple automaton is also semisimple.
\end{theorem}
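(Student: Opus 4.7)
I argue the contrapositive: suppose $\mathrsfs{A}$ is synchronizing but not semisimple, and I construct a nontrivial automaton-congruence on $Q$, contradicting simplicity.

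\emph{Step 1 (a non-reset product of radical words).} Since by assumption $\Rad(\mathrsfs{A}) \supsetneq \mathcal{S}$, the ideal $\Rad(\mathrsfs{A})/\mathcal{S}$ of $\Sigma^{*}/\mathcal{S}$ is a non-zero nilpotent ideal (as recalled after Proposition~1). Let $m \geq 2$ be its nilpotency index: $(\Rad(\mathrsfs{A})/\mathcal{S})^{m} = 0$ but $(\Rad(\mathrsfs{A})/\mathcal{S})^{m-1} \neq 0$. By the definition of $m$ I can pick $u_{1}, \ldots, u_{m-1} \in \Rad(\mathrsfs{A})$ such that the concatenation $u := u_{1} u_{2} \cdots u_{m-1}$ is not reset. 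Put $H := Q \cdot u$, so $|H| \geq 2$.

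\emph{Step 2 (key identity: $uvu \in \mathcal{S}$ for every $v$).} For any $v \in \Sigma^{*}$, write
\[
u v u \;=\; \bigl(u_{1} u_{2} \cdots u_{m-1} v\bigr) \cdot u_{1} u_{2} \cdots u_{m-1}.
\]
Since $\Rad(\mathrsfs{A})$ is a two-sided ideal of $\Sigma^{*}$ and $u_{1} \in \Rad(\mathrsfs{A})$, the prefix $u_{1} u_{2} \cdots u_{m-1} v$ lies in $\Rad(\mathrsfs{A})$. Thus $uvu$ is a product of $m$ radical words (the prefix plus the $m-1$ original factors), and by nilpotency $uvu \in \mathcal{S}$. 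Therefore $Q \cdot uvu = H \cdot vu$ is a singleton for every $v \in \Sigma^{*}$.

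\emph{Steps 3--5 (building and checking the congruence).} Define $\sim_{u}$ on $Q$ by $q \sim_{u} p$ iff $q \cdot vu = p \cdot vu$ for every $v \in \Sigma^{*}$. This is an automaton-congruence: if $q \sim_{u} p$ and $a \in \Sigma$, then for every $v \in \Sigma^{*}$, $(q \cdot a) \cdot vu = q \cdot (av) u = p \cdot (av) u = (p \cdot a) \cdot vu$. It is \emph{not universal}: taking $v = \epsilon$, universality would force $u \in \mathcal{S}$, contrary to Step 1. It is \emph{not the identity}: for any $h_{1}, h_{2} \in H$ and any $v \in \Sigma^{*}$, the elements $h_{1} \cdot vu$ and $h_{2} \cdot vu$ both lie in the singleton $H \cdot vu$ of Step 2, so $h_{1} \sim_{u} h_{2}$; since $|H| \geq 2$, a $\sim_{u}$-class has size at least $2$. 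This nontrivial congruence contradicts the simplicity of $\mathrsfs{A}$.

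\emph{Where the work is.} The heart of the argument is the combinatorial trick of Step 2, which converts the algebraic nilpotency of $\Rad(\mathrsfs{A})/\mathcal{S}$ (itself a consequence of the Wedderburn-Artin decomposition in equation \eqref{eq: Wedderburn-Artin decomposition}) into the very strong statement that \emph{every} word of the form $uvu$ is reset, by absorbing the arbitrary inner $v$ into a single radical prefix. Given Step 2, the rest is essentially formal. The one subtle point is making sure that the \emph{nilpotency index} of $\Rad(\mathrsfs{A})/\mathcal{S}$ is genuinely witnessed by a single concatenation of radical words, which is exactly what the definition of nilpotency index supplies in Step 1.
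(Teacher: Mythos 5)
Your proof is correct. Since the paper only cites \cite{AR16} for this theorem and does not reproduce a proof, a direct comparison is not possible, but your argument is consistent with the direction the paper indicates: the unnumbered lemma the paper records (also from \cite{AR16}) asserts that for a non-semisimple synchronizing automaton and \emph{any} $w\in\Rad(\mathrsfs{A})\setminus\mathcal{S}$ there is a non-trivial congruence $\sigma\subseteq\Ker(w)$, which by contrapositive yields the theorem. Your proof establishes a weaker but still sufficient statement: it produces \emph{one} such word $u$ together with an explicit non-trivial congruence $\sim_u\subseteq\Ker(u)$ (take $v=\epsilon$ in your definition). The genuinely new ingredient relative to what one would read off from the paper's lemma is the nilpotency-index device in Step~2. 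Note that the identity $uvu\in\mathcal{S}$ for all $v$ is \emph{not} true for an arbitrary radical non-reset word $w$ (since $\Rad(\mathcal{R})^2$ need not vanish), so your choice of $u$ at the boundary of nilpotency is essential and cannot be dispensed with; this is what lets you absorb the arbitrary middle segment $v$ into a single radical factor and invoke $(\Rad(\mathrsfs{A})/\mathcal{S})^m=0$. Given that, Steps~3--5 are routine. The only external fact you rely on is the nilpotency of $\Rad(\mathrsfs{A})/\mathcal{S}$, which the paper explicitly recalls, so the argument is self-contained modulo that citation. In short: a correct, elementary, and slightly more economical route to the theorem than deriving it from the full strength of the cited lemma.
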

Therefore, it seems that there is a connection between semisimplicity and the difficulty of synchronizing an automaton with a short (below the quadratic bound) reset word. On the other hand, finding reset words in the semisimple case, looks easier because of the nice structure (\ref{eq: Wedderburn-Artin decomposition}) and the fact that finding radical words is the same as finding reset words. Moreover, in case the automaton is not semisimple, there is a natural congruence that allows the construction of reset words in an ``inductive way''. The key lemma is the following. 
\begin{lemma}\cite{AR16}
  Let $\mathrsfs{A}=\la Q,\Sigma,\delta\ra$ be a synchronizing automaton which is not semisimple. Let $w\in\Rad(\mathrsfs{A})\setminus\mathcal{S}$ and let $\Ker(w)$ be the kernel of the transformation induced by the word $w$. Then there is a non-trivial congruence $\sigma$ with $\sigma\subseteq \Ker(g)$.
\end{lemma}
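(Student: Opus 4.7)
The plan is to define
\[
\sigma := \{(p,q) \in Q\times Q : p\cdot u = q\cdot u\ \text{for every}\ u\in \Sigma^{*}w\Sigma^{*}\}
= \bigcap_{u\in\Sigma^{*}w\Sigma^{*}} \Ker(u),
\]
and show this is the desired non-trivial congruence contained in $\Ker(w)$. The easy structural facts come first: $\sigma$ is an equivalence relation as an intersection of equivalence relations, and it is a congruence because $\Sigma^{*}w\Sigma^{*}$ is a two-sided ideal of $\Sigma^{*}$: for any letter $a\in\Sigma$ and any $u\in\Sigma^{*}w\Sigma^{*}$ we have $au\in\Sigma^{*}w\Sigma^{*}$, so $(p\cdot a)\cdot u = p\cdot(au) = q\cdot(au) = (q\cdot a)\cdot u$ whenever $(p,q)\in\sigma$. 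Taking $u=w$ gives $\sigma\subseteq \Ker(w)$.

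The main task is to prove $\sigma$ is not the identity relation. First $\Ker(w)$ itself is not the identity: otherwise $w$ would act as a bijection on $Q$, so $\rho(w)$ would be an invertible element of $\mathcal{R}$, but $\rho(w)\in \Rad(\mathcal{R})$ is nilpotent, forcing $\rho(w)=0$ and hence $w\in\mathcal{S}$, a contradiction. So pick $(p_0,q_0)\in \Ker(w)$ with $p_0\ne q_0$, and suppose toward a contradiction that $\sigma$ is the identity. Then there is $u=\alpha w\beta\in \Sigma^{*}w\Sigma^{*}$ with $p_0\cdot u\neq q_0\cdot u$, which forces both $p_0\cdot\alpha\neq q_0\cdot\alpha$ and $p_0\cdot(\alpha w)\neq q_0\cdot(\alpha w)$ (otherwise the $u$-images would coincide). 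Hence the pair $(p_0\cdot\alpha, q_0\cdot\alpha)$ lies in an ``escape zone'' of $\Ker(w)$. Since $w^{N}\in\mathcal{S}$ for some $N$, there is a least $k_0\ge 2$ with $p_0\cdot\alpha w^{k_0}=q_0\cdot\alpha w^{k_0}$; setting $V_0=\alpha w^{k_0-1}\in\Sigma^{*}w\Sigma^{*}$ gives $(p_1,q_1):=(p_0\cdot V_0, q_0\cdot V_0)\in \Ker(w)$ with $p_1\ne q_1$. Iterating produces an infinite sequence of such pairs in the finite set $\Ker(w)\setminus\{(p,p):p\in Q\}$ together with super-letters $V_i\in\Sigma^{*}w\Sigma^{*}$, so by pigeonhole there exist $i<j$ with $(p_i,q_i)=(p_j,q_j)$. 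The composed word $V=V_iV_{i+1}\cdots V_{j-1}\in \Sigma^{*}w\Sigma^{*}\subseteq \Rad(\mathrsfs{A})$ then fixes both $p_i$ and $q_i$. Because $V\in \Rad(\mathrsfs{A})$, some power $V^{N'}$ lies in $\mathcal{S}$, so $V^{N'}$ is a reset word and $Q\cdot V^{N'}$ is a single state $q^{*}$. Iterating $p_i\cdot V=p_i$ yields $p_i=p_i\cdot V^{N'}=q^{*}$, and symmetrically $q_i=q^{*}$, giving $p_i=q_i$ and contradicting our construction.

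The hard part is precisely this non-triviality step: one must carefully extract super-letters $V_i$ that (i) still belong to $\Sigma^{*}w\Sigma^{*}$, so that any composition of them stays inside $\Rad(\mathrsfs{A})$, and (ii) keep the trajectory inside the finite set $\Ker(w)$ so a cycle closes via pigeonhole. Once the cycle is closed, the fact that $\Rad(\mathrsfs{A})/\mathcal{S}$ is nilpotent — so that some power of the composed radical word $V$ lies in $\mathcal{S}$ — gives the contradiction cleanly.
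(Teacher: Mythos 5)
Your proof is correct. Taking $\sigma=\bigcap_{u\in\Sigma^{*}w\Sigma^{*}}\Ker(u)$ is the natural candidate, the verifications that $\sigma$ is a congruence contained in $\Ker(w)$ are right, and the non-triviality argument --- bootstrapping from the non-triviality of $\Ker(w)$, iterating through pairs of distinct states in $\Ker(w)$, closing a cycle by pigeonhole, and then contradicting nilpotence of $\Rad(\mathrsfs{A})/\mathcal{S}$ --- is sound.

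That said, the non-triviality step can be shortened considerably: the iterative pigeonhole construction is essentially rediscovering the ``ceiling'' imposed by nilpotence, and one can invoke that ceiling directly. Since $\Rad(\mathrsfs{A})/\mathcal{S}$ is a nilpotent ideal of $\Sigma^{*}/\mathcal{S}$, choose $m\ge 1$ maximal with $\Rad(\mathrsfs{A})^{m}\not\subseteq\mathcal{S}$ (such $m$ exists because $\Rad(\mathrsfs{A})\setminus\mathcal{S}\neq\emptyset$ by the non-semisimplicity hypothesis) and pick any $z\in\Rad(\mathrsfs{A})^{m}\setminus\mathcal{S}$. Then $|Q\cdot z|\ge 2$, so take $p\neq q$ in $Q\cdot z$. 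For every $u\in\Sigma^{*}w\Sigma^{*}\subseteq\Rad(\mathrsfs{A})$ one has $zu\in\Rad(\mathrsfs{A})^{m+1}\subseteq\mathcal{S}$, hence $Q\cdot zu$ is a singleton and $p\cdot u=q\cdot u$; thus $(p,q)\in\sigma$ with $p\neq q$, and $\sigma$ is not the identity --- no cycle-closing needed. Your construction pays for the same conclusion with more bookkeeping: you track a trajectory of pairs in $\Ker(w)$ and extract a fixed word $V\in\Rad(\mathrsfs{A})$, only to then invoke the very same nilpotence to conclude. One small stylistic point: in your argument that $\Ker(w)$ is not the identity relation, the phrase ``forcing $\rho(w)=0$'' is logically slightly off; the cleaner statement is that $\rho(w)$ is nonzero (since $w\notin\mathcal{S}$) and nilpotent (since $\rho(w)\in\Rad(\mathcal{R})$), hence not invertible, so $w$ cannot act as a permutation of $Q$.
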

Indeed, if $\mathrsfs{A}$ is not semisimple and $w\in\Rad(\mathrsfs{A})$ is a radical word, then by the above lemma we may consider the quotient automaton $\mathrsfs{B}=\mathrsfs{A}/\sigma$. Consider any reset word $u\in \Syn(\mathrsfs{B})$, then since $\sigma\subseteq \Ker(w)$ we deduce that $uw\in \Syn(\mathrsfs{A})$.

\subsection{Former-rank and semisimple automata}
The \emph{former-rank} of $\mathrsfs{A} = \la Q,\Sigma,\delta\ra$ is the smallest reachable subset $H$ of $Q$ with $|H|>1$. In formulae:
$$
\Fr(\mathrsfs{A} )=\min\{|Q\cdot u|: u\in\Sigma^{*}\setminus\Syn(\mathrsfs{A}) \}
$$
The set of \emph{former-synchronizing words} is $\Fs(\mathrsfs{A} )=\{u\in\Sigma^{*}: |Q\cdot u|=\Fr(\mathrsfs{A} )\}$.
The following proposition relates the notion of former-rank with the $i$-th rank.
\begin{prop}\label{prop: former rank}
With the above notation, 
$$
\Fr(\mathrsfs{A} )\le \min\{\Rnk(\mathcal{I}_i): i\in [1,k]\}.
$$ 
Moreover, if $\Fs(\mathrsfs{A} )\setminus \Rad(\mathrsfs{A} )\neq \emptyset$ then
$$
\min\{\Rnk(\mathcal{I}_i): i\in [1,k]\}= \Fr(\mathrsfs{A} )
$$
\end{prop}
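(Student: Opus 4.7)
\medskip

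\noindent\textbf{Proof plan.} The plan is to treat the two inequalities separately, in each case exploiting the rank inequality $\rk(xy)\le\min\{\rk(x),\rk(y)\}$ together with the structural Lemmas \ref{lem: 0-simple semigroup} and \ref{lem: constant rank}.

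For the inequality $\Fr(\mathrsfs{A})\le\min_i\Rnk(\mathcal{I}_i)$, fix $i\in[1,k]$ and pick any witness $g\in\mathcal{I}_i\setminus\{0\}$ realizing $\Rnk_i(g)=\Rnk(\mathcal{I}_i)$ (the $\min$ in the definition is attained because it is over a non-empty set of non-negative integers). By definition of $\Rnk_i(g)$ there is a word $u\in\Sigma^*$ with $\theta_i(u)=g$ and $\rk(u)=\Rnk(\mathcal{I}_i)$. Since $g\neq 0$ and a word is radical iff every $\theta_j$ kills it, one has $u\notin\Rad(\mathrsfs{A})$, and in particular $u\notin\mathcal{S}$ because $\mathcal{S}\subseteq\Rad(\mathrsfs{A})$. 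Thus $u$ is a non-synchronizing word of rank $\Rnk(\mathcal{I}_i)$, which forces $\Fr(\mathrsfs{A})\le\rk(u)=\Rnk(\mathcal{I}_i)$. Minimising over $i$ yields the first claim.

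For the reverse inequality, assume $\Fs(\mathrsfs{A})\setminus\Rad(\mathrsfs{A})\neq\emptyset$ and pick $u$ in this difference, so that $\rk(u)=\Fr(\mathrsfs{A})$ and some $\theta_i(u)$ is non-zero. The key step is to push $\theta_i(u)$ inside the $0$-minimal ideal $\mathcal{I}_i$ of $\mathcal{M}_i$ without increasing the rank. This is exactly where the uniqueness clause in Lemma \ref{lem: 0-simple semigroup} is used: in a finite semigroup with zero, the unique $0$-minimal ideal is contained in every non-zero two-sided ideal, hence the principal ideal $\mathcal{M}_i^{1}\,\theta_i(u)\,\mathcal{M}_i^{1}$ contains $\mathcal{I}_i$. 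Translating back through $\theta_i$ (and using that $\epsilon$ maps to the identity) there exist $v_1,v_2\in\Sigma^*$ with $\theta_i(v_1uv_2)\in\mathcal{I}_i\setminus\{0\}$. Lemma \ref{lem: constant rank} then gives
$$
\Rnk(\mathcal{I}_i)=\Rnk_i(\theta_i(v_1uv_2))\le\rk(v_1uv_2)\le\rk(u)=\Fr(\mathrsfs{A}),
$$
which implies $\min_i\Rnk(\mathcal{I}_i)\le\Fr(\mathrsfs{A})$ and, combined with the first part, the desired equality.

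The only point that needs genuine care is the semigroup-theoretic claim that the principal ideal of any non-zero element of $\mathcal{M}_i$ contains $\mathcal{I}_i$; I expect this to be the main (though mild) obstacle. It follows from the finiteness of $\mathcal{M}_i$ together with the uniqueness of the $0$-minimal ideal, since any non-zero two-sided ideal must itself contain some $0$-minimal ideal and $\mathcal{I}_i$ is the only one available. Everything else reduces to the characterisation of $\Rad(\mathrsfs{A})$ in terms of the maps $\theta_i$ and to the monotonicity of $\rk$ under concatenation.
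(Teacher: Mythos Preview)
Your proof is correct. The first inequality is handled the same way (you simply spell out what the paper dismisses as ``clear''). For the reverse inequality, both you and the paper start from $u\in\Fs(\mathrsfs{A})\setminus\Rad(\mathrsfs{A})$ with $\theta_i(u)\neq 0_i$, manufacture a non-zero element of $\mathcal{I}_i$ that factors through $\theta_i(u)$, and then finish with Lemma~\ref{lem: constant rank} plus the monotonicity of $\rk$ under concatenation. The only real difference is how that element is produced. You argue purely semigroup-theoretically: since $\mathcal{M}_i$ is finite with zero and $\mathcal{I}_i$ is its unique $0$-minimal ideal, $\mathcal{I}_i$ lies inside the non-zero principal ideal $\mathcal{M}_i\,\theta_i(u)\,\mathcal{M}_i$, giving $v_1,v_2$ with $\theta_i(v_1uv_2)\in\mathcal{I}_i\setminus\{0_i\}$. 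The paper instead uses the ambient linear structure: it writes $1_i=\sum_j\lambda_j r_j$ with $r_j\in\mathcal{I}_i$ (possible because the span of $\mathcal{I}_i$ is a non-zero ideal of the simple ring $\mathbb{M}_{n_i}(\mathbb{C})$), multiplies by $\theta_i(u)$, and picks a $j$ with $r_j\theta_i(u)\neq 0_i$. Your route is slightly more elementary and never leaves the monoid $\mathcal{M}_i$; the paper's route stays within its Wedderburn--Artin framework and only needs a one-sided multiplication. Either way the argument is a couple of lines.
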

\begin{proof}
By the definition of former-rank we clearly have:
$$
\Fr(\mathrsfs{A} )\le \min\{\Rnk(\mathcal{I}_i): i\in [1,k]\}.
$$ 
Let $u\in \Fs(\mathrsfs{A} )\setminus \Rad(\mathrsfs{A} )$. Then, there is an index $i\in [1,k]$ such that $\theta_{i}(u)\neq 0_{i}$. Moreover, by Lemma \ref{lem: 0-simple semigroup}  the unit $1_i$ of the $i$-th factor monoid $\mathbb{M}_{n_i}(\mathbb{C})$ is a linear combination $\sum_j \lambda_{j}r_j=1_i$ of elements $r_{j}\in \mathcal{I}_i$ for some $\lambda_{j}\in\mathbb{C}$. Hence, 
$$
\sum_j \lambda_{j}r_j\theta_{i}(u)=\theta_{i}(u)\neq 0
$$
from which we deduce that $r_j\theta_{i}(u)\neq  0_{i}$ for some index $j$ in the summation. Hence, since $r_{j}, r_j\theta_{i}(u)\in \mathcal{I}_i$ by Lemma \ref{lem: constant rank} we have the other side of the inequality
$$
 \min\{\Rnk(\mathcal{I}_i): i\in [1,k]\}\le \Rnk(\mathcal{I}_i)=\Rnk(r_j\theta_{i}(u))\le \Rnk(\theta_{i}(u))\le \Fr(\mathrsfs{A} )
$$
\end{proof}
Note that in case the automaton is semisimple, condition $\Fs(\mathrsfs{A} )\setminus \Rad(\mathrsfs{A} )\neq \emptyset$ is always satisfied.

\section{Support and minimal sections}
Following \cite{AR16}, for a word $v\in\Sigma^{*}$ the \emph{support} $\supp(v)$ is the following subset of $[1,k]$:
$$
\supp(v)=\{i\in[1,k]: \theta_{i}(v)\neq 0_{i}\}
$$
Note that $\supp(v)=\emptyset$ if and only if $u\in\Rad(\mathrsfs{A})$. We consider the following poset
$$
\mathcal{S}(v)=\{\supp(z): z\in\Sigma^{*}v\Sigma^{*} \}
$$
ordered by inclusion. A non-empty minimal element $\supp(u)$ in $\mathcal{S}(v)$ is called a \emph{$v$-minimal section} (or just a minimal section when $v$ is clear from the context), and the element $u$ realizing such set is called \emph{$v$-minimal}. Clearly, if $\supp(u)$ is a $v$-minimal section and $\supp(z)\subsetneq \supp(u)$, for some $z\in \Sigma^{*}v\Sigma^{*}$, then $z\in \Rad(\mathrsfs{A})$. We have the following lemma.
\begin{lemma}\label{lem: remark minimal}
Let $u$ be a $v$-minimal word. Then, for any $a,b\in\Sigma^{*}$, the word $aub$ either belongs to $\Rad(\mathrsfs{A})$, or $aub$ is $v$-minimal. 
\end{lemma}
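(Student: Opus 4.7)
The plan is to observe that the statement follows almost directly from two facts: that $\theta_i$ is a monoid homomorphism (so that passing to a two-sided factor can only shrink the support), and that $v$-minimality is by definition minimality of a nonempty set in the inclusion poset $\mathcal{S}(v)$.

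First I would note that since $u$ is $v$-minimal, in particular $\supp(u)\in\mathcal{S}(v)$, which means $u$ belongs to $\Sigma^{*}v\Sigma^{*}$. Hence for any $a,b\in\Sigma^{*}$ we have $aub\in \Sigma^{*}v\Sigma^{*}$ as well, so $\supp(aub)\in\mathcal{S}(v)$ and it is a legitimate candidate to compare with $\supp(u)$ inside that poset.

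Next I would use the homomorphism property of each $\theta_i$: for every $i\in [1,k]$,
\[
\theta_{i}(aub)=\theta_{i}(a)\,\theta_{i}(u)\,\theta_{i}(b),
\]
so $\theta_{i}(u)=0_{i}$ forces $\theta_{i}(aub)=0_{i}$. This gives the inclusion $\supp(aub)\subseteq \supp(u)$.

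Now I would invoke the minimality of $\supp(u)$ among nonempty members of $\mathcal{S}(v)$. Two cases arise. If $\supp(aub)=\emptyset$, then by definition $aub\in\Rad(\mathrsfs{A})$, which is the first alternative. Otherwise $\supp(aub)$ is a nonempty element of $\mathcal{S}(v)$ contained in $\supp(u)$; by minimality this forces the equality $\supp(aub)=\supp(u)$, and since $\supp(aub)$ is again a minimal nonempty element of $\mathcal{S}(v)$, the word $aub$ is $v$-minimal, giving the second alternative. There is essentially no obstacle here; the content is purely that of a minimal element in a poset being closed under operations that can only decrease it.
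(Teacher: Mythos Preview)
Your proof is correct and follows essentially the same argument as the paper: both use that $\theta_i(u)=0_i$ forces $\theta_i(aub)=0_i$ to obtain $\supp(aub)\subseteq\supp(u)$, and then invoke minimality of $\supp(u)$ in $\mathcal{S}(v)$. Your version is slightly more explicit in verifying that $aub\in\Sigma^{*}v\Sigma^{*}$ so that $\supp(aub)$ actually lies in $\mathcal{S}(v)$, a point the paper leaves implicit.
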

\begin{proof}
It follows from the fact that if $\theta_{j}(u)=0_{j}$, then $\theta_{j}(aub)=0_{j}$ as well. Hence, $\supp(aub)\subseteq \supp(u)$, and so since $u$ is $v$-minimal we get that either $aub\in \Rad(\mathrsfs{A})$ or $\supp(aub)=\supp(u)$.
\end{proof}
\begin{defn}
Let $u$ be a $v$-minimal word. Let $i\in\supp(u)$ and $g\in \mathcal{I}_{i}$. We say that  a word $w$ $u$-represents $g$ if $w\in\Sigma^{*}u\Sigma^{*}$, $\theta_{i}(w)=g$ and either $g=0_{i}$, or the rank $\rk(w)$ is minimum among all the words with the above properties. 
\end{defn}
We will see that the last condition is equivalent to request  that $\rk(w)=\Rnk(w)$. We have the following observation.
\begin{lemma}
If $w$ $u$-represents $g\in \mathcal{I}_{i}$, then either $g=\theta_{i}(w)=0_{i}$, or $\supp(w)=\supp(u)$.
\end{lemma}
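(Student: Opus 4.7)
The plan is to deduce the statement directly from Lemma~\ref{lem: remark minimal}. The definition of ``$w$ $u$-represents $g$'' provides two pieces of information about $w$: the inclusion $w\in\Sigma^{*}u\Sigma^{*}$ and the equality $\theta_{i}(w)=g$. The minimum-rank clause in that definition is irrelevant for the present statement; I expect it will become useful only in later results that build on this definition (indeed, the excerpt already anticipates that the minimum-rank condition is equivalent to $\rk(w)=\Rnk_i(w)$, which is a finer fact than what we need here).

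I would split on whether $g=0_i$. If $g=0_{i}$, then by hypothesis $\theta_{i}(w)=g=0_{i}$, and the first disjunct of the conclusion holds trivially. Otherwise $g\neq 0_{i}$, so $\theta_{i}(w)=g\neq 0_{i}$, which means $i\in\supp(w)$; in particular $\supp(w)\neq\emptyset$, i.e.\ $w\notin\Rad(\mathrsfs{A})$. Now, since $w\in\Sigma^{*}u\Sigma^{*}$, I may write $w=aub$ for some $a,b\in\Sigma^{*}$, and Lemma~\ref{lem: remark minimal} applied to this factorization leaves only the alternative that $aub$ is $v$-minimal, i.e.\ $\supp(w)=\supp(u)$. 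There is really no obstacle: the whole content is already encapsulated in Lemma~\ref{lem: remark minimal}, and this lemma just discards the radical alternative using the hypothesis $\theta_{i}(w)=g$ in the case $g\neq 0_{i}$.
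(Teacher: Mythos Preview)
Your proof is correct and follows essentially the same route as the paper: both arguments reduce immediately to Lemma~\ref{lem: remark minimal} applied to a factorization $w=aub$, with the radical alternative discarded (in your case by the observation that $g\neq 0_{i}$ forces $i\in\supp(w)$, in the paper by noting that $w\in\Rad(\mathrsfs{A})$ forces $g=\theta_{i}(w)=0_{i}$). The only difference is the direction of the case split, which is cosmetic.
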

\begin{proof}
Since $u$ is a $v$-minimal word, and $w$ contains $u$ as a factor then by Lemma \ref{lem: remark minimal} either $w\in\Rad(\mathrsfs{A})$ (corresponding to the case $g=\theta_{i}(w)=0_{i}$), or $\supp(w)=\supp(u)$.
\end{proof}
\begin{lemma}\label{lem: constant rank for representatives}
With the above notation, the following facts hold:
\begin{itemize}
\item Every element $g\in\mathcal{I}_{i}$ is $u$-representable;
\item If $w$ is a word that $u$-represents $g\in \mathcal{I}_{i}\setminus\{0_{i}\}$, then $\rk(w)=\Rnk(\mathcal{I}_i)$;
\item If $x$ is a word such that $\theta_{i}(x)=0_{i}$ for some $i\in\supp(u)$, then $\theta_{j}(x)=0_{j}$ for all $j\in \supp(u)$.
\end{itemize}
\end{lemma}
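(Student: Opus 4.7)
The plan is to prove the three bullets in the stated order. The key technical input, used throughout, is that $\mathcal{M}_i$ is a finite monoid (because $\mathrsfs{A}^{*}$ is finite, and $\mathcal{M}_i=\varphi_i(\psi(\mathrsfs{A}^{*}))$), whose unique $0$-minimal ideal is $\mathcal{I}_i$. By finiteness, every non-zero two-sided ideal of $\mathcal{M}_i$ contains a $0$-minimal ideal, and uniqueness forces it to contain $\mathcal{I}_i$; in particular
$$
\mathcal{I}_i \subseteq \mathcal{M}_i\,\theta_i(u)\,\mathcal{M}_i
$$
whenever $i\in \supp(u)$, since then $\theta_i(u)\neq 0_i$.

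For the first bullet I would split on $g$: if $g=0_i$, concatenate $u$ with any reset word $s\in\mathcal{S}$ and observe $\theta_i(us)=\theta_i(u)\cdot 0_i=0_i$; if $g\neq 0_i$, the ideal inclusion above gives $g=\theta_i(a)\theta_i(u)\theta_i(b)$ for some $a,b\in\Sigma^{*}$, so among words of the form $aub$ with $\theta_i(aub)=g$ one of minimum rank is a $u$-representative.

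For the second bullet, pick $z\in\Sigma^{*}$ with $\theta_i(z)=g$ realizing $\rk(z)=\Rnk_i(g)=\Rnk(\mathcal{I}_i)$, the last equality by Lemma~\ref{lem: constant rank}. Next, find a non-zero idempotent $e\in\mathcal{I}_i$ that left-stabilizes $g$, i.e.\ $eg=g$: such an $e$ exists by the Rees--Suschkewitsch structure of finite $0$-simple semigroups, since every non-zero $\mathcal{L}$-class of $\mathcal{I}_i$ contains an idempotent. By the first bullet, $e$ is $u$-representable by some $\beta\in\Sigma^{*}u\Sigma^{*}$. Then $\beta z\in \Sigma^{*}u\Sigma^{*}$, $\theta_i(\beta z)=eg=g$, and
$$
\rk(\beta z)=|(Q\cdot\beta)\cdot z|\le |Q\cdot z|=\rk(z)=\Rnk(\mathcal{I}_i),
$$
which, combined with the lower bound $\rk(\beta z)\ge \Rnk_i(g)=\Rnk(\mathcal{I}_i)$, gives equality. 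Hence the minimum of $\rk(w)$ over $w\in\Sigma^{*}u\Sigma^{*}$ with $\theta_i(w)=g$ is $\Rnk(\mathcal{I}_i)$, as required.

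For the third bullet, assume $\theta_i(x)=0_i$ with $i\in\supp(u)$. For every $a\in\Sigma^{*}$ the word $xau$ lies in $\Sigma^{*}u\Sigma^{*}$ and satisfies $\theta_i(xau)=0_i$, so $\supp(xau)\subsetneq \supp(u)$; by $v$-minimality of $u$ this forces $\supp(xau)=\emptyset$. Letting $a$ range over $\Sigma^{*}$ yields, for each $j\in \supp(u)$, the equality $\theta_j(x)\cdot\mathcal{M}_j\cdot\theta_j(u)=\{0_j\}$. Multiplying on the right by $\mathcal{M}_j$ and invoking the ideal inclusion of the first paragraph produces $\theta_j(x)\cdot\mathcal{I}_j=\{0_j\}$, and the faithful left action of $\mathcal{M}_j$ on $\mathcal{I}_j$ (Lemma~\ref{lem: 0-simple semigroup}) forces $\theta_j(x)=0_j$. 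The main point to justify carefully is the inclusion $\mathcal{I}_j\subseteq \mathcal{M}_j\theta_j(u)\mathcal{M}_j$; the remaining manipulations are routine, and the idempotent used in the second bullet is standard from the Rees matrix description of finite $0$-simple semigroups.
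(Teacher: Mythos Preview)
Your proof is correct; all three bullets go through as you describe. The argument differs from the paper's in two respects worth noting. First, for bullets one and two the paper works in the ambient ring: it uses simplicity of $\mathbb{M}_{n_i}(\mathbb{C})$ to write $1_i=\sum_j\lambda_j\theta_i(a_jub_j)$, multiplies on the right by a word $z$ of rank $\Rnk(\mathcal{I}_i)$ to produce an element of $\mathcal{I}_i\setminus\{0_i\}$ of the form $\theta_i(a_sub_sz)$ with $\rk(a_sub_sz)=\Rnk(\mathcal{I}_i)$, and then reaches an arbitrary $g$ via $\mathcal{J}$-equivalence; this yields representability and the rank equality in one stroke. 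You instead stay inside the semigroup, using the inclusion $\mathcal{I}_i\subseteq\mathcal{M}_i\theta_i(u)\mathcal{M}_i$ for representability and a Rees idempotent $e$ with $eg=g$ to force the rank down in a separate step. Both routes are clean; yours is slightly more elementary (no linear combinations), while the paper's is marginally more economical in that bullets one and two are handled simultaneously. Second, for bullet three the paper multiplies by $x$ on the \emph{right} (considering $aubx$) and invokes the faithful right action of $\mathcal{M}_j$ on $\mathcal{I}_j$, whereas you multiply on the left (considering $xau$) and use the faithful left action; these are mirror images of one another and equally valid given Lemma~\ref{lem: 0-simple semigroup}.
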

\begin{proof}
Let $u$ be a $v$-minimal word. Since $R=\mathbb{M}_{n_i}(\mathbb{C})$ is simple, then $R\theta_{i}(u)R=R$. In particular we have 
$$
\sum_{j}\lambda_{j}\theta_{i}(a_{j}ub_{j})=1_{i}
$$
for some suitable words $a_{j}, b_{j}$.
Let $z$ be a word such that $\theta_{i}(z)\in\mathcal{I}_{i}\setminus\{0_{i}\}$ and with $\rk(z)=\Rnk(\mathcal{I}_{i})$. Thus, we have:
$$
\sum_{j}\lambda_{j}\theta_{i}(a_{j}ub_{j}z)=\theta_{i}(z)\neq 0_{i}
$$
from which we deduce that there is some element $\theta_{i}(a_{s}ub_{s}z)\in\mathcal{I}_{i}\setminus\{0_{i}\}$. We clearly have 
$$
\Rnk(\mathcal{I}_{i})\le \rk(a_{s}ub_{s}z)\le \rk(z)=\Rnk(\mathcal{I}_{i})
$$
i.e., $\rk(a_{s}ub_{s}z)=\Rnk(\mathcal{I}_{i})$. If $g=0_{i}$, then $g$ is $u$-representable. Otherwise, consider a generic $g\in  \mathcal{I}_{i}\setminus\{0_{i}\}$. Since $ \mathcal{I}_{i}\setminus\{0_{i}\}$ is a $\mathcal{J}$-class, we have that there are suitable words $x,y$ such that $g=\theta_{i}(xa_{s}ub_{s}zy)$. Hence, also in this case we get
$$
\Rnk(\mathcal{I}_{i})\le\rk(xa_{s}ub_{s}zy)\le \rk(z)=\Rnk(\mathcal{I}_{i})
$$ 
and so $g$ is $u$-representable. Moreover, we have $\Rnk(\mathcal{I}_{i})\le \rk(w)\le \rk(xa_{s}ub_{s}zy)=\Rnk(\mathcal{I}_{i})$, i.e., $\rk(w)=\Rnk(\mathcal{I}_i)$.
Let us prove the last property. Take any $g\in\mathcal{I}_{j}$, for some $j\in\supp(u)$. Since $g$ is $u$-represented $g=\theta_{j}(aub)$, for some $a,b\in\Sigma^{*}$. Consider the word $aubx$, we clearly have $\theta_{i}(aubx)=0_{i}$, whence $\supp(aubx)\subsetneq \supp(aub)=\supp(u)$. Hence, since $\supp(u)$ is a minimal section we get $\supp(aubx)=\emptyset$, i.e., $g\theta_{j}(x)=\theta_{j}(aubx)=0_{j}$. Whence, $\theta_{j}(x)=0_{j}$ since $g$ is an arbitrary element in $\mathcal{I}_{j}$.
\end{proof}
We have the following lemma.
\begin{lemma}\label{lem: representative}
Let $w$ be a word that $u$-represents $g\in \mathcal{I}_{i}$, then for any $a,b\in\Sigma^{*}$ we have that the word $awb$ $u$-represents $\theta_{i}(a)g\theta_{i}(b)$. 
\end{lemma}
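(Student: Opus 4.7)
The plan is to verify directly the three conditions in the definition of ``$u$-represents''. First I would check membership: $awb \in \Sigma^{*}u\Sigma^{*}$ is immediate because $w$ already contains $u$ as a factor, so $awb$ does too. Second, the value under $\theta_i$ is computed by multiplicativity: $\theta_i(awb)=\theta_i(a)\theta_i(w)\theta_i(b)=\theta_i(a)g\theta_i(b)$. This leaves the minimality-of-rank condition, which splits into two cases according to whether $\theta_i(a)g\theta_i(b)$ is zero.

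If $\theta_i(a)g\theta_i(b)=0_i$, the rank condition is vacuous and $awb$ $u$-represents $0_i$ by definition. The interesting case is when $\theta_i(a)g\theta_i(b)\neq 0_i$. Here I would first observe that since $g\in\mathcal{I}_i$ and $\mathcal{I}_i$ is a two-sided ideal of $\mathcal{M}_i$ by Lemma \ref{lem: 0-simple semigroup}, the product $\theta_i(a)g\theta_i(b)$ lies in $\mathcal{I}_i\setminus\{0_i\}$. Consequently, by Lemma \ref{lem: constant rank} applied to $awb$, any word mapping to an element of $\mathcal{I}_i\setminus\{0_i\}$ has rank at least $\Rnk(\mathcal{I}_i)$, so
$$
\rk(awb)\ge \Rnk(\mathcal{I}_i).
$$

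For the reverse inequality I would exploit the monotonicity of rank under concatenation: since $\rk(xy)\le\min\{\rk(x),\rk(y)\}$, we have $\rk(awb)\le\rk(w)$. Because $w$ $u$-represents the nonzero element $g\in\mathcal{I}_i$ (note that $g\neq 0_i$ is forced because otherwise $\theta_i(a)g\theta_i(b)=0_i$, contradicting the case assumption), Lemma \ref{lem: constant rank for representatives} gives $\rk(w)=\Rnk(\mathcal{I}_i)$. Combining the two inequalities yields $\rk(awb)=\Rnk(\mathcal{I}_i)$, which is exactly the minimum rank attained by any word representing a nonzero element of $\mathcal{I}_i$, so $awb$ satisfies the minimal-rank requirement and hence $u$-represents $\theta_i(a)g\theta_i(b)$.

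There is no real obstacle here; the only mildly delicate point is to confirm that $\theta_i(a)g\theta_i(b)\in\mathcal{I}_i$ whenever it is nonzero, which follows from $\mathcal{I}_i$ being a two-sided ideal, and to invoke the constancy of rank from Lemma \ref{lem: constant rank} to push the lower bound against the upper bound coming from rank monotonicity.
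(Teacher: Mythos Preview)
Your proof is correct and essentially the same as the paper's. Both verify the three defining conditions, split into the zero and nonzero cases, and in the nonzero case sandwich $\rk(awb)$ between $\Rnk(\mathcal{I}_i)$ (from Lemma~\ref{lem: constant rank} or Lemma~\ref{lem: constant rank for representatives}) and $\rk(w)=\Rnk(\mathcal{I}_i)$ (from rank monotonicity and Lemma~\ref{lem: constant rank for representatives}); the paper phrases the final minimality check as a contradiction, while you state it directly, but the content is identical.
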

\begin{proof}
The following facts hold: $\theta_{i}(a)g\theta_{i}(b)\in\mathcal{I}_{i}$, $awb\in\Sigma^{*}u\Sigma^{*}$, and $\theta_{i}(awb)=\theta_{i}(a)g\theta_{i}(b)$. If $\theta_{i}(awb)=0_{i}$, then clearly $awb$ $u$-represents $\theta_{i}(a)g\theta_{i}(b)$. Otherwise, if $\theta_{i}(awb)\neq 0_{i}$, then $\theta_{i}(w)\neq 0_{i}$ and by Lemma \ref{lem: constant rank for representatives} we have:
$$
\Rnk(\mathcal{I}_i)\le \rk(awb)\le \rk(w)=\Rnk(\mathcal{I}_i).
$$
Hence, $\rk(aub)=\Rnk(\mathcal{I}_i)$ and if we would have a word $z$ containing $u$ as a factor such that $\theta_{i}(a)g\theta_{i}(b)=\theta_{i}(z)$ with $\rk(z)<\rk(awb)=\Rnk(\mathcal{I}_{i})$, then by Lemma \ref{lem: constant rank for representatives} we would have $\theta_{i}(z)=0_{i}$, a contradiction. 
\end{proof}
The following proposition shows that minimal sections are disjoint.
\begin{prop}\label{prop: empty intersection}
Let $\supp(u_{1}), \supp(u_{2})$ be two minimal sections in $\mathcal{S}(v_{1}), \mathcal{S}(v_{2})$, respectively ($v_{1}, v_{2}$ non-necessarily distinct). Then, $\supp(u_{1})\cap\supp(u_{2})=\emptyset$. 
\end{prop}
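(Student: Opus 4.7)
The plan is to argue by contrapositive, showing that if $\supp(u_1)\cap\supp(u_2)\neq\emptyset$ then $\supp(u_1)=\supp(u_2)$; this is the natural reading of the statement, since a minimal section is non-empty by definition, so two minimal sections can only meet if they coincide. The main tool I would use is the third bullet of Lemma~\ref{lem: constant rank for representatives}, which, applied to a $v$-minimal word $u$, asserts that if $\theta_i(x)=0_i$ for some $i\in\supp(u)$, then $\theta_j(x)=0_j$ for every $j\in\supp(u)$, for an \emph{arbitrary} word $x$.

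First I would apply that bullet with $u=u_1$ and $x=u_2$. Pick $i\in\supp(u_1)\cap\supp(u_2)$; in particular $\theta_i(u_2)\neq 0_i$. If there were any $j\in\supp(u_1)$ with $\theta_j(u_2)=0_j$, the bullet would force $\theta_i(u_2)=0_i$, a contradiction. Hence $\theta_j(u_2)\neq 0_j$ for every $j\in\supp(u_1)$, i.e., $\supp(u_1)\subseteq\supp(u_2)$. Exchanging the roles of $u_1$ and $u_2$ (applying the same bullet with $u=u_2$ and $x=u_1$, which uses the $v_2$-minimality of $u_2$) yields $\supp(u_2)\subseteq\supp(u_1)$, whence equality, contradicting the assumption that the two minimal sections are distinct.

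The only point requiring care is checking that the third bullet of Lemma~\ref{lem: constant rank for representatives} genuinely applies to a word $x$ that need not already involve the $v$-minimal word as a factor. Inspecting its proof, the auxiliary word $aubx$ built there still lies in $\Sigma^{*}u\Sigma^{*}$, so its support is in $\mathcal{S}(v)$ and the minimality of $\supp(u)$ can be invoked to collapse $\supp(aubx)$ to $\emptyset$ whenever it is strictly contained in $\supp(u)$. Once this observation is in place, the symmetric two-line argument above completes the proof, and I do not expect any serious obstacle beyond this bookkeeping.
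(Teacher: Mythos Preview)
Your argument is correct and in spirit very close to the paper's proof. The paper argues directly: from $i\in\supp(u_1)\cap\supp(u_2)$ it uses simplicity of $\mathbb{M}_{n_i}(\mathbb{C})$ to write $1_i=\sum_j\lambda_j\theta_i(a_ju_1b_j)$, multiplies by $\theta_i(u_2)$, and extracts a single word $w=a_ju_1b_ju_2$ containing both $u_1$ and $u_2$ as factors with $\emptyset\neq\supp(w)\subseteq\supp(u_1)\cap\supp(u_2)$, which by minimality forces $\supp(u_1)=\supp(w)=\supp(u_2)$. You instead invoke the third bullet of Lemma~\ref{lem: constant rank for representatives} twice, once in each direction, to get the two containments. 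That bullet is proved by exactly the same simplicity trick, so you are using the same idea in a slightly more modular packaging; your version is arguably cleaner because it avoids redoing the argument. Your reading of the statement (distinct minimal sections are disjoint) and your check that the bullet applies to an arbitrary word $x$ are both correct.
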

\begin{proof}
Suppose contrary to our claim that $\supp(u_{1})\cap\supp(u_{2})\neq\emptyset$. Let $i\in \supp(u_{1})\cap\supp(u_{2})$. By the same argument of Lemma \ref{lem: constant rank for representatives} since 
$$
\sum_{j}\lambda_{j}\theta_{i}(a_{i}u_{1}b_{j})=1_{i}
$$
we have have:
$$
\sum_{j}\lambda_{j}\theta_{i}(a_{i}u_{1}b_{j}u_{2})=\theta_{i}(u_{2})\neq 0_{i}
$$
Thus, we deduce that $\theta_{i}(a_{j}u_{1}b_{j}u_{2})\neq 0_{i}$ for some $j$. Since $a_{j}u_{1}b_{j}u_{2}\in \Sigma^{*}u_{1}\Sigma^{*}\cap \Sigma^{*}u_{2}\Sigma^{*}$, we get:
$$
\emptyset\neq \{s: \theta_{s}(a_{j}u_{1}b_{j}u_{2})\neq 0_{s}\}\subseteq \supp(u_{1})\cap\supp(u_{2})
$$
that contradicts the minimality of both $\supp(u_{1})$ and $\supp(u_{2})$. 
\end{proof}
We say that a subset $T\subseteq [1,k]$ is a \emph{core} whenever the condition $\theta_i(u)=0_{i}$ for all $i\in T$ implies $\theta_i(u)=0_{i}$ for all $i\in[1,k]$. Let $\mathcal{C}\subseteq [1,k]$ be a minimal core with respect to the inclusion. We have the following lemma.
\begin{lemma}\label{lem: minimal core}
Let $\mathcal{C}\subseteq [1,k]$ be a minimal core. Then, there is a family 
$$
\mathcal{F}=\{\supp(u_{1}), \ldots, \supp(u_{m})\}
$$
of minimal sections covering $\mathcal{C}$. 
\end{lemma}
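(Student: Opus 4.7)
The plan is to produce, for each index $i \in \mathcal{C}$, a minimal section whose support contains $i$, and then take these finitely many sections as the family $\mathcal{F}$. The key observation is that the core property of $\mathcal{C}$ forces the minimal element one extracts from $\mathcal{S}(u_i)$ to still contain the distinguished index $i$.

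The first step exploits the minimality of $\mathcal{C}$. Fix $i \in \mathcal{C}$. Since $\mathcal{C} \setminus \{i\}$ is not a core, there exists a word $u_{i} \in \Sigma^{*} \setminus \Rad(\mathrsfs{A})$ with $\theta_{j}(u_{i}) = 0_{j}$ for every $j \in \mathcal{C} \setminus \{i\}$. Because $\mathcal{C}$ \emph{is} a core and $u_{i} \notin \Rad(\mathrsfs{A})$, we cannot have $\theta_{i}(u_{i}) = 0_{i}$; otherwise $u_{i}$ would vanish on all of $\mathcal{C}$ and would be radical. Hence $i \in \supp(u_{i})$ and $\supp(u_{i}) \cap \mathcal{C} = \{i\}$.

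Next, I pass to a minimal section inside $\mathcal{S}(u_{i})$. For any $z = a u_{i} b \in \Sigma^{*} u_{i} \Sigma^{*}$, the equality $\theta_{j}(u_{i}) = 0_{j}$ forces $\theta_{j}(z) = 0_{j}$, so every element of $\mathcal{S}(u_{i})$ is contained in $\supp(u_{i})$. Since $\supp(u_{i})$ is a non-empty element of the finite poset $\mathcal{S}(u_{i})$, there exists $w_{i} \in \Sigma^{*} u_{i} \Sigma^{*}$ such that $\supp(w_{i})$ is a $u_{i}$-minimal section. In particular, $\supp(w_{i}) \subseteq \supp(u_{i})$, and therefore $\supp(w_{i}) \cap \mathcal{C} \subseteq \{i\}$.

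It remains to verify $i \in \supp(w_{i})$, and this is the only substantive step. If $i \notin \supp(w_{i})$, then $\supp(w_{i}) \cap \mathcal{C} = \emptyset$, so $\theta_{j}(w_{i}) = 0_{j}$ for every $j \in \mathcal{C}$; since $\mathcal{C}$ is a core, $w_{i} \in \Rad(\mathrsfs{A})$, i.e., $\supp(w_{i}) = \emptyset$, contradicting the non-emptiness built into the definition of a minimal section. Hence $i \in \supp(w_{i})$, and $\mathcal{F} = \{\supp(w_{i}) : i \in \mathcal{C}\}$ is the required finite family of minimal sections covering $\mathcal{C}$. The main (and essentially only) obstacle is this last inference: ensuring that shrinking to a minimal support does not drop the distinguished index $i$, which is precisely where the core hypothesis on $\mathcal{C}$ is indispensable.
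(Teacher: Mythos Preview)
Your proof is correct, and it proceeds along a somewhat different line from the paper's. The paper builds the cover iteratively: it picks any $v_{1}$ non-zero on some index of $\mathcal{C}$, extracts a minimal section $\supp(u_{1})$, then---if $\mathcal{C}$ is not yet covered---chooses $v_{2}$ vanishing on $\supp(u_{1})\cap\mathcal{C}$ but non-zero somewhere new in $\mathcal{C}$, extracts $\supp(u_{2})$, and invokes Proposition~\ref{prop: empty intersection} to get $\supp(u_{1})\cap\supp(u_{2})=\emptyset$, guaranteeing progress; this repeats at most $|\mathcal{C}|$ times. You instead treat each $i\in\mathcal{C}$ independently: minimality of $\mathcal{C}$ hands you a word $u_{i}$ whose support meets $\mathcal{C}$ exactly in $\{i\}$, and then any minimal section below $\supp(u_{i})$ still contains $i$ by the core property. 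Your argument avoids Proposition~\ref{prop: empty intersection} altogether and yields the additional information that each section meets $\mathcal{C}$ in a singleton (hence exactly $|\mathcal{C}|$ distinct sections), whereas the paper's iterative construction may terminate with fewer, larger sections.
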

\begin{proof}
Let $v_{1}$ be a word such that $\theta_{i}(v_{1})\neq 0_{i}$ for some $i\in \mathcal{C}$ and consider a $v_{1}$-minimal word $u_{1}$ such that $\supp(u_{1})$ is a minimal section. By Lemma \ref{lem: constant rank for representatives} and the definition of core we deduce $\supp(u_{1})\cap \mathcal{C}\neq \emptyset$ for if we would have $u_{1}\in\Rad(\mathrsfs{A})$ and $\supp(u_{1})=\emptyset$, a contradiction. If $\mathcal{C}\subseteq \supp(u_{1})$, then we are done. Otherwise, by the minimality of $\mathcal{C}$ we may find a word $v_{2}$ with $\theta_{i}(v_{2})=0_{i}$ for all $i\in \supp(u_{1})\cap \mathcal{C}$ such that $\theta_{j}(v_{2})\neq 0_{j}$ for some $j\in \mathcal{C}\setminus \supp(u_{1})$. Let $u_{2}$ be a $v_{2}$-minimal word such that $\supp(u_{2})$ is a minimal section. By Proposition \ref{prop: empty intersection} we have $\supp(u_{2})\cap \supp(u_{1})=\emptyset$ and by the same reason of $u_{1}$ we have that $\supp(u_{2})\cap \mathcal{C}\neq \emptyset$. If $\mathcal{C}\subseteq \left(\supp(u_{1})\cup \supp(u_{2})\right)$, then we are done. Otherwise, we may repeat (at most $|\mathcal{C}|$-times) the previous argument until we find $m$ words $u_{1}, \ldots, u_{m}$ such that $\mathcal{C}\subseteq \left(\supp(u_{1})\cup \supp(u_{2})\cup\ldots \cup \supp(u_{m})\right)$ for some minimal sections $\supp(u_{i})$, $i\in [1,m]$.
\end{proof}

\section{Main result}
In this section we prove the main result of the paper, but first we introduce an equivalence relation that is a key ingredient to prove this result. 
Let $u$ be a $v$-minimal word. Fix an index $i\in \supp(u)$. By Lemma \ref{lem: constant rank for representatives} all the elements from $\mathcal{I}_{i}$ are $u$-representable. We define a binary relation $\sigma_{i}$ on $\mathcal{I}_{i}$ in the following way. We say that $g\sigma_{i} f $ if one of the following conditions is satisfied:
\begin{itemize}
\item $g=f$;
\item there exists $w_{1},w_{2}\in\Sigma^{*}$ that $u$-represent $g$ and $f$, respectively with: 
$$|Q\cdot w_{1} \cap Q\cdot w_{2}|>1$$
\end{itemize}
We have the following proposition.
\begin{prop}\label{prop: compatibility}
The relation $\sigma_{i}$ is right-compatible with respect to the product: if $g\sigma_{i} f $, then for any $a\in \Sigma^{*}$ we have $g\theta_{i}(a)\,\sigma_{i}\,f\theta_{i}(a)$.
\end{prop}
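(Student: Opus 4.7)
The plan is to reduce to the nontrivial case and then use $w_1 a$ and $w_2 a$ as the witnessing pair for $g\theta_i(a)\,\sigma_i\,f\theta_i(a)$. Concretely, if $g=f$ there is nothing to prove, so assume $g\neq f$ and fix words $w_1, w_2\in \Sigma^*$ that $u$-represent $g$ and $f$ respectively, with $|Q\cdot w_1\cap Q\cdot w_2|>1$. By Lemma \ref{lem: representative} applied with empty left factor and right factor $a$, the word $w_1 a$ $u$-represents $\theta_i(w_1)\theta_i(a)=g\theta_i(a)$ and, analogously, $w_2 a$ $u$-represents $f\theta_i(a)$. Thus it suffices to show that either $g\theta_i(a)=f\theta_i(a)$ (so that the first clause of the definition of $\sigma_i$ applies) or $|Q\cdot w_1 a\cap Q\cdot w_2 a|>1$.

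If $g\theta_i(a)=f\theta_i(a)$, in particular if both are $0_i$, we are done. Otherwise at least one of $g\theta_i(a)$, $f\theta_i(a)$ is different from $0_i$; by symmetry assume $f\theta_i(a)\neq 0_i$. This forces $f\neq 0_i$, so by Lemma \ref{lem: constant rank for representatives} both $w_2$ and $w_2 a$ (which $u$-represent the non-zero elements $f$ and $f\theta_i(a)$ of $\mathcal{I}_i$) have rank exactly $\Rnk(\mathcal{I}_i)$. Since $Q\cdot w_2 a=(Q\cdot w_2)\cdot a$ and $|Q\cdot w_2 a|=|Q\cdot w_2|$, the action of $a$ is injective on $Q\cdot w_2$. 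Now pick two distinct states $p_1, p_2\in Q\cdot w_1\cap Q\cdot w_2$ (guaranteed by the hypothesis on $w_1,w_2$). Both $p_j a$ lie in $Q\cdot w_1 a\cap Q\cdot w_2 a$, and since $p_1,p_2\in Q\cdot w_2$ and $a$ is injective on $Q\cdot w_2$, we get $p_1 a\neq p_2 a$. Hence $|Q\cdot w_1 a\cap Q\cdot w_2 a|>1$ as required.

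The only real subtlety is that multiplying by $a$ could in principle collapse the witnessing intersection; the rank-preservation statement in Lemma \ref{lem: constant rank for representatives} is exactly what prevents this, by guaranteeing injectivity of the action of $a$ on the image of whichever side of the relation stays outside $\mathcal{I}_i\setminus\{0_i\}\mapsto\{0_i\}$. The case that has to be handled separately is when both products vanish, and there the two elements $g\theta_i(a)$ and $f\theta_i(a)$ are literally equal, so the first clause of the definition of $\sigma_i$ takes care of it without any combinatorial argument.
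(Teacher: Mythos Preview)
Your proof is correct and follows essentially the same approach as the paper: both arguments pass to the witnesses $w_1a,\,w_2a$ via Lemma~\ref{lem: representative} and use the rank statement of Lemma~\ref{lem: constant rank for representatives} to control what happens to the intersection $Q\cdot w_1\cap Q\cdot w_2$ under the action of $a$. The only difference is organisational: the paper case-splits on whether $|H\cdot a|>1$ for $H=Q\cdot w_1\cap Q\cdot w_2$ (and in the collapsing case argues that both products become $0_i$), whereas you case-split directly on whether $g\theta_i(a)=f\theta_i(a)$ and, when they differ, use rank preservation on the nonzero side to force injectivity of $a$ on $Q\cdot w_2$. Your organisation is in fact slightly cleaner, since it handles the possibility that one of $f,g$ is already $0_i$ without a separate sub-case.
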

\begin{proof}
Clearly if $f=g$ then $g\theta_{i}(a)\,\sigma_{i}\,f\theta_{i}(a)$. Otherwise, let $w_{1},w_{2}\in\Sigma^{*}$ be two words that $u$-represent $f$ and $g$, respectively, and satisfying the inequality $|Q\cdot w_{1} \cap Q\cdot w_{2}|>1$. Note that the set $H=Q\cdot w_{1} \cap Q\cdot w_{2}$ has at least two elements, and $H\cdot b\subseteq Q\cdot (aw_{1}b) \cap Q\cdot (aw_{2}b)$. We consider the following two cases.
\begin{itemize}
\item If $|H\cdot b|>1$ then $|Q\cdot (w_{1}a) \cap Q\cdot (w_{2}a)|>1$, and by Lemma \ref{lem: representative} we have that $w_{1}a$ and $w_{2}a$ $u$-represent $g\theta_{i}(a)$ and $f\theta_{i}(a)$, respectively. Hence, $g\theta_{i}(a)\,\sigma_{i}\,f\theta_{i}(a)$ holds. 
\item Otherwise from $|H\cdot b|=1$ we deduce
$$
|Q\cdot w_{1}a|< |Q\cdot w_{1}|, \; |Q\cdot w_{2}a|< |Q\cdot w_{2}|
$$
If $f,g\neq 0_{i}$, then by Lemma \ref{lem: constant rank for representatives} we necessarily have $|Q\cdot w_{1}|=|Q\cdot w_{2}|=\Rnk(\mathcal{I}_i)$, from which we deduce $g\theta_{i}(a)=f\theta_{i}(a)=0_{i}$, i.e., $g\theta_{i}(a)\,\sigma_{i}\,f\theta_{i}(a)$. If $f=g=0_{i}$, or just one among $f,g$, say $f$, is not equal to $0_{i}$, then by the same argument we deduce $g\theta_{i}(a)=f\theta_{i}(a)=0_{i}$, and so also in this case we get $g\theta_{i}(a)\,\sigma_{i}\,f\theta_{i}(a)$.
\end{itemize}
\end{proof}
Since $\sigma_{i}$ is reflexive, symmetric and right-compatible with respect to the product, we may consider the transitive closure $\sim_{i}$ of $\sigma_{i}$ that is clearly a right-congruence on $\mathcal{I}_{i}$.
\\
To state the next result we need to recall some basic fact on the packing problem \cite{HandComb}. Let $X=[1,n]$ be a finite set of $n$ elements, and let $t,r$ be two integers in $[1,n]$. The $t$-packing problem is the problem of determining the maximum size $D(t,r,n)$ of a collection of $r$-subsets of $X$ such that no $t$-subset is covered more than once. With a double counting argument one can easily show that the following upper bound holds:
\begin{equation}\label{eq: packing bound}
D(t,r,n)\le \frac{{n \choose t}}{{r \choose t}}
\end{equation}
with equality if and only if a Steiner system $S(t,r,n)$ exists. Note that if $r_{1}\le r_{2}$, then $D(t,r_{1},n)\ge D(t, r_{2},n)$. 
\\
Henceforth we put $r_{i}=\Rnk(\mathcal{I}_i)$, $i\in [1,k]$, and $n=|Q|$. We have the following proposition. 
\begin{prop}\label{prop: packing number}
With the above notation we have that $|\mathcal{I}_i/\sim_{i}|\le D(2,r_{i},n)+1$. In particular, for any $g\in \mathcal{I}_{i}$ there is a word $z$ with $|z|\le D(2,r_{i},n)$ such that 
$$
g\theta_{i}(z)\sim_{i} 0_{i}
$$
\end{prop}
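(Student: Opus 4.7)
The proof splits naturally into two parts: bounding the number of $\sim_{i}$-classes by $D(2,r_{i},n)+1$, and then using right-compatibility together with a pigeonhole argument to bound the length of $z$.

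For the cardinality bound, the plan is to encode each non-zero $\sim_{i}$-class as an $r_{i}$-subset of $Q$ and recognize the resulting family as a $2$-packing. Concretely, for each non-zero class $C$ I would pick any $g_{C}\in C$ and any $u$-representative $w_{C}$ of $g_{C}$; by Lemma \ref{lem: constant rank for representatives} one has $|Q\cdot w_{C}|=r_{i}$. The crucial observation is that for distinct non-zero classes $C_{1}\neq C_{2}$ the sets $Q\cdot w_{C_{1}}$ and $Q\cdot w_{C_{2}}$ share at most one element, for otherwise the very definition of $\sigma_{i}$ gives $g_{C_{1}}\,\sigma_{i}\,g_{C_{2}}$, collapsing the two classes. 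Since $r_{i}\ge 2$, two equal $r_{i}$-subsets automatically share at least two elements, so distinct classes also yield distinct subsets. Hence $\{Q\cdot w_{C}\}_{C}$ is a collection of $r_{i}$-subsets of $[1,n]$ in which no $2$-subset is covered twice, giving at most $D(2,r_{i},n)$ non-zero classes and thus the claimed inequality $|\mathcal{I}_{i}/\sim_{i}|\le D(2,r_{i},n)+1$.

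For the length bound, existence of some $z$ with $g\theta_{i}(z)\sim_{i} 0_{i}$ is automatic: any reset word $y\in\mathcal{S}$ satisfies $\theta_{i}(y)=0_{i}$, whence $g\theta_{i}(y)=0_{i}$. I would then take $z$ of minimum length and look at the $|z|+1$ prefixes $z_{0}=\varepsilon,z_{1},\ldots,z_{|z|}=z$ of $z$, producing elements $g\theta_{i}(z_{j})\in\mathcal{I}_{i}$. If $|z|>D(2,r_{i},n)$, then $|z|+1>|\mathcal{I}_{i}/\sim_{i}|$ and the pigeonhole principle yields indices $j<k$ with $g\theta_{i}(z_{j})\sim_{i} g\theta_{i}(z_{k})$. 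By the right-compatibility of $\sim_{i}$ from Proposition \ref{prop: compatibility}, multiplying by $\theta_{i}(z_{k+1}\cdots z_{|z|})$ shows that the strictly shorter word $z_{j}z_{k+1}\cdots z_{|z|}$ also satisfies $g\theta_{i}(z_{j}z_{k+1}\cdots z_{|z|})\sim_{i} 0_{i}$, contradicting the minimality of $z$.

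I expect the first part to be the delicate step: one must carefully identify classes with concrete subsets and verify that the ``share more than one state'' condition encoded in $\sigma_{i}$ survives passage to the transitive closure $\sim_{i}$ once we commit to one representative per class. The hypothesis $r_{i}\ge 2$ is essential precisely to keep distinct classes distinct as subsets. Once the dictionary between non-zero $\sim_{i}$-classes and a $2$-packing is in place, the cardinality bound is immediate and the length statement reduces to a standard pigeonhole-with-right-congruence shortening argument.
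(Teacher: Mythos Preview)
Your proof is correct and follows essentially the same route as the paper's. For the first part, the paper also picks one $u$-representative $z_{j}$ for each non-zero class $[g_{j}]_{\sim_{i}}$, sets $F_{j}=Q\cdot z_{j}$, and observes that $|F_{i}\cap F_{j}|\le 1$ for $i\neq j$ (your contrapositive via $\sigma_{i}$ is exactly the justification), so $\{F_{j}\}$ is a $2$-packing of $r_{i}$-subsets and $\ell\le D(2,r_{i},n)$. For the second part the paper is more terse: it just notes that $\mathcal{M}_{i}$ acts on the right of the quotient $\mathcal{I}_{i}/\!\sim_{i}$, which has $\ell+1$ classes, so $[0_{i}]_{\sim_{i}}$ can be reached from $[g]_{\sim_{i}}$ by a word of length at most $\ell$; your prefix-pigeonhole shortening is the standard explicit version of the same shortest-path bound. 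Your remark that $r_{i}\ge 2$ is needed (and holds, since a word of rank $1$ lies in $\mathcal{S}$ and hence maps to $0_{i}$) is a useful clarification that the paper leaves implicit.
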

\begin{proof}
Let $\mathcal{I}_i/\sim_{i}=\{[0_{i}]_{\sim_{i}}, [g_{1}]_{\sim_{i}},\ldots, [g_{\ell}]_{\sim_{i}}\}$. Let $z_{1}, \ldots, z_{\ell}$ be words that $u$-represent $g_{j}$ for $j\in [1,\ell]$. Put $F_{j}=Q\cdot z_{j}$ for $j\in [1,\ell]$. Then, by Lemma \ref{lem: constant rank for representatives} $\mathcal{F}=\{F_{1}, \ldots, F_{\ell}\}$ is a family of $r_{i}$-sets satisfying the property that 
$$
|F_{i}\cap F_{j}|\le 1
$$
for all $i,j\in [1,\ell]$ with $i\neq j$. Therefore, each pair is covered at most once, and so $\ell\le D(2,r_{i},n)$. Since $\sim_{i}$ is a right-congruence, $\mathcal{M}_{i}$ acts on the right of $\mathcal{I}_i/\sim_{i}$. Thus, for any $g\in \mathcal{I}_{i}$ there is a word $z$ such that $[g]_{\sim_{i}}\theta_{i}(z)=[0_{i}]_{\sim_{i}}$, and so since $\ell\le D(2,r_{i},n)$ we may find such $z$ with $|z|\le \ell$.
\end{proof}
We have the following proposition. 
\begin{prop}\label{prop: single component}
For any $v$-minimal word $u$ and $i\in\supp(u)$, there is a word $w_{i}\in\Sigma^{*}$ with $|w_{i}|\le n_{i}D(2,r_{i},n)$ such that
$$
\theta_{i}(w_{i})=\sum_{j=1}^{m}h_{j}t_{j}
$$
for some elements $h_{j}\in\mathbb{M}_{n_i}(\mathbb{C})$ and $t_{j}\in \mathcal{I}_{i}$ with $t_{j}\sim_{i} 0_{i}$ for all $j\in[1,m]$.
\end{prop}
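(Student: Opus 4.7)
The plan is to construct $w_i$ by iterated right-concatenation, using Proposition \ref{prop: packing number} once for each generator in a short decomposition of $1_i$ involving elements of $\mathcal{I}_i$. The driving observation is that $\sim_i$ is a right-congruence on $\mathcal{I}_i$, so once $r_k\theta_i(w)$ has been driven into the class $[0_i]_{\sim_i}$, this property persists under further right multiplication.

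First I would establish a decomposition $1_i=\sum_{j=1}^{N}c_j r_j$ with $N\le n_i$, $c_j\in\mathbb{M}_{n_i}(\mathbb{C})$, and $r_j\in\mathcal{I}_i$. Since $\mathcal{I}_i$ is a two-sided semigroup ideal of $\mathcal{M}_i$ and $\mathcal{M}_i$ spans $\mathbb{M}_{n_i}(\mathbb{C})$, the $\mathbb{C}$-span of $\mathcal{I}_i$ is a nonzero two-sided ideal of the simple algebra $\mathbb{M}_{n_i}(\mathbb{C})$, hence the whole algebra. Consequently, $\sum_{r\in\mathcal{I}_i}\mathbb{M}_{n_i}(\mathbb{C})\,r=\mathbb{M}_{n_i}(\mathbb{C})$. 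Since left ideals of $\mathbb{M}_{n_i}(\mathbb{C})$ correspond bijectively to subspaces of $\mathbb{C}^{n_i}$ via the row-space of their elements, a greedy choice of $r_1,r_2,\ldots\in\mathcal{I}_i$ strictly enlarges the accumulated row-space at each step, and therefore terminates after at most $n_i$ steps.

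Next, I would build $w_i$ iteratively. Put $w^{(0)}=\varepsilon$, and for each $j=1,\ldots,N$ apply Proposition \ref{prop: packing number} to $g:=r_j\theta_i(w^{(j-1)})\in\mathcal{I}_i$, obtaining $z_j$ with $|z_j|\le D(2,r_i,n)$ and $g\theta_i(z_j)\sim_i 0_i$; then set $w^{(j)}:=w^{(j-1)}z_j$. Because $\sim_i$ is a right-congruence, for each $k<j$ the inductive relation $r_k\theta_i(w^{(j-1)})\sim_i 0_i$ yields
$$
r_k\theta_i(w^{(j)})=r_k\theta_i(w^{(j-1)})\,\theta_i(z_j)\;\sim_i\;0_i\cdot\theta_i(z_j)=0_i,
$$
so the invariant accumulates. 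After $N\le n_i$ steps, the word $w_i:=w^{(N)}$ satisfies $|w_i|\le n_i D(2,r_i,n)$ and $r_j\theta_i(w_i)\sim_i 0_i$ for every $j\in[1,N]$.

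To conclude, I would right-multiply $1_i=\sum_j c_j r_j$ by $\theta_i(w_i)$ to obtain
$$
\theta_i(w_i)=\sum_{j=1}^{N}c_j\bigl(r_j\theta_i(w_i)\bigr),
$$
and set $h_j:=c_j\in\mathbb{M}_{n_i}(\mathbb{C})$ and $t_j:=r_j\theta_i(w_i)\in\mathcal{I}_i$. The elements $t_j$ lie in $\mathcal{I}_i$ (a right ideal), and $t_j\sim_i 0_i$ by construction, giving the required form. The main obstacle is the bound $N\le n_i$ in the unit decomposition; it rests on the semisimple-module structure of $\mathbb{M}_{n_i}(\mathbb{C})$ as a direct sum of $n_i$ isomorphic minimal left ideals, translated into the row-space description of left ideals used above.
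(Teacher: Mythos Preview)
Your proposal is correct and follows essentially the same argument as the paper: obtain a decomposition $1_i=\sum_{j=1}^{m}h_jg_j$ with $g_j\in\mathcal{I}_i$ and $m\le n_i$, then iteratively apply Proposition~\ref{prop: packing number} to drive each $g_j\theta_i(w)$ into $[0_i]_{\sim_i}$, using that $\sim_i$ is a right-congruence to preserve earlier gains. The only cosmetic difference is your justification of $N\le n_i$ via the row-space description of left ideals in $\mathbb{M}_{n_i}(\mathbb{C})$, whereas the paper phrases this as a Jordan--H\"older argument on the left-module structure; these are two ways of saying the same thing.
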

\begin{proof}
Let $R=\mathbb{M}_{n_i}(\mathbb{C})$. Since $R$ is simple there are elements $g_1,\ldots, g_m\in \mathcal{I}_i$ such that
$$
Rg_{1}+\cdots+ Rg_{m}=R
$$
Moreover, since $R$ is the direct sum of $n_i$ left ideals, by the Jordan-H\"{o}lder Theorem we may assume $m\le n_j$. In particular we have 
$$
\sum_{j=1}^{m} h_{j}g_j=1_i
$$
for some suitable elements $h_{j}\in R$. Consider $g_{1}\in  \mathcal{I}_{i}$, by Proposition \ref{prop: packing number} there is a word $z_{1}$ such that $|z_{1}|\le  D(2,r_{i},n)$ and $g_{1}\theta_{i}(z_{1})\sim_{i} 0_{i}$. Consider $g_{2}\theta(v_{1})\in \mathcal{I}_{i}$ and apply again Proposition \ref{prop: packing number} to find a word $z_{2}$ such that $|z_{2}|\le  D(2,r_{i},n)$ and $g_{2}\theta(z_{1}z_{2})\sim_{i} 0_{i}$. Since $\sim_{i}$ is a right-congruence, $g_{1}\theta_{i}(z_{1}z_{2})\sim_{i} 0_{i}$ holds as well. Continuing in this way we may find a sequence $z_{1}, \ldots, z_{s}$ of words such that each $|z_{j}|\le  D(2,r_{i},n)$ and 
$$
g_{j}\theta_{i}(z_{1}z_{2}\ldots z_{s})\sim_{i} 0_{i}
$$
holds for any $s\le m$. In particular taking $w_{i}=z_{1}\ldots z_{m}$ we have that $|w_{i}|\le  mD(2,r_{i},n)\le  n_{i}D(2,r_{i},n)$ and $g_{j}\theta_{i}(z_{1}z_{2}\ldots z_{m})\sim_{i} 0_{i}$ for all $j\in[1,m]$. Therefore, by putting $t_{j}=g_{j}\theta_{i}(w_{i})$, $j\in [1,m]$, we deduce that 
$$
\theta_{i}(w_{i})=\sum_{j=1}^{m}  h_{j}t_{j}
$$
for some suitable $t_{j}\in \mathcal{I}_i$ with $t_{j}\sim_{i} 0_{i}$.
\end{proof}
In case the automaton is semisimple we obtain the following result.
\begin{theorem}\label{theo: zero component}
If the automaton $\mathrsfs{A}$ is semisimple, then for any $v$-minimal word $u$ and $i\in \supp(u)$, there is a word $w$ with  
$$
|w|\le \min_{i\in\supp(u)}\{n_{i}D(2,r_{i},n)\}
$$
such that $\theta_{i}(w)=0_{i}$ for all $i\in \supp(u)$.
\end{theorem}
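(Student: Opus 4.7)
The plan is to specialize Proposition \ref{prop: single component} to the index $i^{*}\in\supp(u)$ that realizes the minimum in $\min_{i\in\supp(u)}\{n_{i}D(2,r_{i},n)\}$. This produces a word $w=w_{i^{*}}$ of length at most $n_{i^{*}}D(2,r_{i^{*}},n)=\min_{i\in\supp(u)}\{n_{i}D(2,r_{i},n)\}$, together with a decomposition $\theta_{i^{*}}(w)=\sum_{j=1}^{m}h_{j}t_{j}$ in which every $t_{j}\in\mathcal{I}_{i^{*}}$ satisfies $t_{j}\sim_{i^{*}}0_{i^{*}}$. If I can show that each such $t_{j}$ is in fact equal to $0_{i^{*}}$, then $\theta_{i^{*}}(w)=0_{i^{*}}$, and the third bullet of Lemma \ref{lem: constant rank for representatives}, applied to $w$ with $i^{*}\in\supp(u)$, immediately upgrades this to $\theta_{j}(w)=0_{j}$ for every $j\in\supp(u)$, which is the conclusion sought.

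Consequently the whole argument reduces to proving that, in the semisimple case, $[0_{i^{*}}]_{\sim_{i^{*}}}=\{0_{i^{*}}\}$. Since $\sim_{i^{*}}$ is the transitive closure of the symmetric relation $\sigma_{i^{*}}$, an induction along a connecting chain reduces this further to the base case: $t\,\sigma_{i^{*}}\,0_{i^{*}}\Rightarrow t=0_{i^{*}}$. Assuming the contrary, I pick $u$-representatives $w_{1},w_{2}\in\Sigma^{*}u\Sigma^{*}$ of $t$ and $0_{i^{*}}$ with $|Q\cdot w_{1}\cap Q\cdot w_{2}|>1$. Because $w_{2}$ has $u$ as a factor one has $\supp(w_{2})\subseteq\supp(u)$; and because $\theta_{i^{*}}(w_{2})=0_{i^{*}}$ with $i^{*}\in\supp(u)$, the third bullet of Lemma \ref{lem: constant rank for representatives} forces $\theta_{j}(w_{2})=0_{j}$ for all $j\in\supp(u)$ too. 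Together these give $\supp(w_{2})=\emptyset$, i.e., $w_{2}\in\Rad(\mathrsfs{A})$. This is where semisimplicity enters: $\Rad(\mathrsfs{A})=\mathcal{S}$, hence $w_{2}$ is a reset word and $|Q\cdot w_{2}|=1$, contradicting $|Q\cdot w_{1}\cap Q\cdot w_{2}|>1$.

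The main obstacle I expect is precisely this last step, namely noticing that any $u$-representative of $0_{i^{*}}$ is forced to be a radical word and therefore, under the semisimplicity hypothesis, a reset word with singleton image; this is the lone place where $\Rad(\mathrsfs{A})=\mathcal{S}$ is really used, and it is what collapses the $\sigma_{i^{*}}$-class of $0_{i^{*}}$ to a single point. Once this is in hand, the remainder of the proof is a direct assembly of Proposition \ref{prop: single component} with the third bullet of Lemma \ref{lem: constant rank for representatives}.
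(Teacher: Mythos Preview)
Your proposal is correct and follows essentially the same approach as the paper's own proof. The only cosmetic differences are organizational: you select the minimizing index $i^{*}$ at the outset rather than at the end, you phrase the collapse of the $\sim_{i^{*}}$-class of $0_{i^{*}}$ as an induction on the length of a $\sigma_{i^{*}}$-chain whereas the paper argues directly on the last link of a minimal chain, and you invoke the third bullet of Lemma~\ref{lem: constant rank for representatives} to show the $u$-representative of $0_{i^{*}}$ is radical while the paper appeals directly to the minimality of $\supp(u)$---these are equivalent.
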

\begin{proof}
By Proposition \ref{prop: single component} there is a word $w_{i}$ with $|w_{i}|\le n_{i}D(2,r_{i},n)$ such that $\theta_{i}(w_{i})=\sum_{j=1}^{m}h_{j}t_{j}$ for some elements $h_{j}\in\mathbb{M}_{n_i}(\mathbb{C})$ and $t_{j}\in \mathcal{I}_{i}$ with $t_{j}\sim_{i} 0_{i}$ for all $j\in[1,m]$. Thus, it is enough to prove that $t_{j}=0_{i}$ for each $j\in[1,m]$. Consider a generic $t_{j}$ for some $j\in [1,m]$. We may assume $t_{j}\neq 0_{i}$. Since $\sim_{i}$ is the transitive closure of $\sigma_{i}$ we have that there are $\ell> 1$ elements $f_{1}, \ldots f_{\ell}\in\mathcal{I}_{i}$ with $f_{1}=t_{j}$, $f_{\ell}=0_{i}$ such that $f_{s}\sigma_{i} f_{s+1}$ for all $s\in [1, \ell-1]$. Choosing the minimal $\ell$ we may assume that $f_{1}, \ldots f_{\ell}$ are distinct. By definition of $\sigma_{i}$ we have words $z_{1}, \ldots z_{\ell}$ that $u$-represent $f_{1}, \ldots, f_{\ell}$, respectively, and such that 
\begin{equation}\label{eq: packing condition}
|Q\cdot z_{s}\cap Q\cdot z_{s+1}|>1
\end{equation}
for all $s\in [1,\ell-1]$. 
\\
We claim that $z_{\ell}$ is reset. Indeed, since $z_{\ell}$ $u$-represents $0_{i}$ we have that $u$ is a factor of $z_{\ell}$ that is also a $v$-minimal word. Hence, $\supp(z_{\ell})\subseteq \supp(u)$ and since $\theta_{i}(z_{\ell})=0_{i}$ we have $\supp(z_{\ell})\subsetneq \supp(u)$ which by the minimality condition on $u$ implies $z_{m}\in \Rad(\mathrsfs{A})=\Syn(\mathrsfs{A})$ since $\mathrsfs{A}$ is semisimple. Suppose $\ell>1$. Therefore, by (\ref{eq: packing condition}) and and the fact that $z_{m}\in\Syn(\mathrsfs{A})$ we get:
$$
1=|Q\cdot z_{m-1}\cap Q\cdot z_{m}|>1
$$
a contradiction. Thus, $\ell=1$ and $t_{j}=f_{1}=0_{i}$. Hence, we get that $\theta_{i}(w_{i})=0_{i}$. The statement now follows by taking the index $i\in \supp(u)$ such that $n_{i}D(2,r_{i},n)$ is minimum and by the last property of Lemma \ref{lem: constant rank for representatives} 
\end{proof}
The following lemma is similar to \cite[Lemma 16]{AR16} and we state here with proof for the sake of completeness. 
\begin{lemma}\label{lem: null ideal}
Consider an ideal $I$ of $\overline{\mathcal{R}}$ of the form
$$
I=\mathbb{M}_{n_{i_1}}(\mathbb{C})\times\cdots\times \mathbb{M}_{n_{i_m}}(\mathbb{C})
$$
for some subset $T=\{i_1,\ldots, i_m\}$ of $[1,k]$. Let $J=\psi^{-1}(I)$. There is a sequence $i_{j}\in\{i_1,\ldots, i_m\}$ of integers for $j\in[1,\ell]$ such that for any words $z_{j}$, $j\in [1,\ell]$ with $\theta_{i_{j}}(z_{j})=0_{i_{j}}$, the word 
$$
u=\prod_{j=\ell}^{1}z_{j}
$$
such that $\rho(u) J=0$. Moreover, $\sum_{j=1}^{\ell}n_{i_{j}}\le n-1$. 
\end{lemma}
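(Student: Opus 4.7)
The plan is to reformulate the condition $\rho(u) J = 0$ in terms of the right $\mathcal{R}$-module structure on $V = w^{\perp}$ and then to construct $u$ by descending through a composition series of an appropriate quotient of $V$. First I would define
$$
U = \{v \in V : v \cdot J = 0\},
$$
which is a right $\mathcal{R}$-submodule of $V$, because for $v \in U$ and $r \in \mathcal{R}$ one has $(vr)J = v(rJ) \subseteq vJ = 0$. Since $V$ is faithful over $\mathcal{R} \subseteq \mathbb{M}_{n-1}(\mathbb{C})$, the matrix identity $\rho(u) J = 0$ is equivalent to the submodule containment $V \cdot \rho(u) \subseteq U$.

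Next I would build a strictly descending chain
$$
V = V_0 \supsetneq V_1 \supsetneq \cdots \supsetneq V_\ell = U
$$
of right $\mathcal{R}$-submodules refining a composition series of $V/U$, so that each quotient $V_{s-1}/V_s$ is simple, isomorphic to the defining $n_{c_s}$-dimensional right module $S_{c_s}^{*}$ attached to the $c_s$-th Wedderburn component, and such that each $c_s \in T$. The sequence of the lemma is then read off as $i_j := c_{\ell+1-j}$. For each $j$, because the action of $\mathcal{R}$ on the simple module $S_{c_{\ell+1-j}}^{*}$ factors through $\theta_{c_{\ell+1-j}} = \theta_{i_j}$, any word $z_j$ with $\theta_{i_j}(z_j) = 0_{i_j}$ has $\rho(z_j)$ annihilating that simple quotient, hence $V_{s-1} \cdot \rho(z_{\ell+1-s}) \subseteq V_s$ for every $s$. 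Composing these inclusions in the order dictated by $u = \prod_{j=\ell}^{1} z_j$, we obtain $V \cdot \rho(u) \subseteq V_\ell = U$, which is exactly $\rho(u) J = 0$. The dimension bound is immediate from the filtration:
$$
\sum_{j=1}^{\ell} n_{i_j} \;=\; \sum_{s=1}^{\ell} \dim(V_{s-1}/V_s) \;=\; \dim(V/U) \;\le\; \dim V \;=\; n-1.
$$

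The hard part of the proof is verifying that the composition series of $V/U$ can be chosen with every index $c_s$ lying in $T$. If $c_s \notin T$ then $\theta_{c_s}(J) = 0$, so $J$ acts trivially on $S_{c_s}^{*}$; the argument then mirrors Lemma 16 of \cite{AR16}, combining the faithfulness of $V$, the maximality of $U$ as the $J$-annihilating submodule, and a careful use of lifted Wedderburn idempotents of $\overline{\mathcal{R}}$ to rule out such composition factors. In the principal application $I = \overline{\mathcal{R}}$ (so $T = [1,k]$, $J = \mathcal{R}$, $U = 0$), this key claim is automatic, and the filtration simply becomes any composition series of $V$ itself, yielding $\sum n_{i_j} = n-1$.
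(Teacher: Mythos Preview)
Your approach diverges from the paper's in a significant way, and the divergence creates a real gap.

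The paper does \emph{not} pass to $V/U$ as an $\mathcal{R}$-module. Instead it regards $V=\mathbb{C}^{n-1}$ as a \emph{$J$-module} and takes a Jordan--H\"older series $V=V_0\supset\cdots\supset V_\ell=0$ of $J$-modules. Because $J/\Rad(J)\cong I$, every simple $J$-factor is either the trivial module (on which all of $J$ acts by zero) or the unique simple module for some $\mathbb{M}_{n_{i_j}}(\mathbb{C})$ with $i_j\in T$. For the trivial factors the paper simply assigns an arbitrary $i_j\in T$: the point is that the relevant operator in the descent is $\rho(z_j\cdots z_1)r\in J$ (for $r\in J$), and any element of $J$ kills a trivial $J$-factor. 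Thus the paper never has to exclude composition factors with index outside $T$; that situation simply does not arise in the $J$-module picture.

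Your route, by contrast, requires the assertion that every $\mathcal{R}$-composition factor of $V/U$ has index in $T$. You do not prove this; you only gesture at ``mirroring Lemma~16 of \cite{AR16}'' with idempotent lifting. But this assertion is \emph{false} for a general finite-dimensional $\mathcal{R}\subseteq\mathbb{M}_{n-1}(\mathbb{C})$. Take $\mathcal{R}$ to be the upper-triangular $2\times 2$ matrices acting on $V=\mathbb{C}^2$ on the right, $\overline{\mathcal{R}}=\mathbb{C}\times\mathbb{C}$, $T=\{2\}$, and $J=\psi^{-1}(0\times\mathbb{C})$. One checks that $U=\{v:vJ=0\}=0$, so $V/U=V$, and $V$ has $S_1$ as a composition factor with $1\notin T$. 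So nothing purely formal (faithfulness of $V$, maximality of $U$, idempotent lifting) forces the indices into $T$; any valid argument would have to use features of the automaton situation that you have not invoked. In the special case $T=[1,k]$ your argument is of course fine and coincides with the paper's, but the lemma is applied in Theorem~\ref{theo: bound for semisimple} with $T=\mathcal{C}$ a proper subset, so you cannot retreat to that case.

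In short: the filtration idea and the dimension count are right, but your choice to filter $V/U$ by $\mathcal{R}$-submodules forces you into an unproved (and in general false) claim. The paper's choice to filter $V$ by $J$-submodules is what makes the index condition automatic.
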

\begin{proof}
Renumbering the indexes in the decomposition of the ideal, we may suppose without loss of generality that $I=\mathbb{M}_{n_{1}}(\mathbb{C})\times\cdots\times \mathbb{M}_{n_{m}}(\mathbb{C})$ for some $m\le k$, and so we may consider $T=[1,m]$. Since $\mathcal{R}$ is a subalgebra of $\mathbb{M}_{n-1}(\mathbb{C})$, the vector space $V=\mathbb{C}^{n-1}$ is a $J$-module. By Proposition~4.8 of~\cite{Lam} $J$ and $J/\Rad(J)$ have the same simple left modules. By Exercise~4.7 of~\cite{Lam} we have $\Rad(J)=J\cap \Rad(\mathcal{R})$, hence $J/\Rad(J)=I$. Let
$$
V=V_0\supset V_1\supset\ldots\supset V_{j}\supset\ldots\supset V_\ell=0
$$
be a Jordan-H\"older series. Each module $V_{j-1}/V_j$ for $j\in [1,\ell]$ is a simple $J$-module and so, by the above argument, also an $I$-module. In particular, we have $u v=\psi(u) v$ for all $u\in J$, $v\in V_{j-1}/V_j$.
We claim that either $mv=0$ for all $m\in J$, $v\in V_{j-1}/V_j$ or there is a $v\in V_{j-1}/V_j$ such that $V_{j-1}/V_j=\mathbb{M}_{n_{i}}(\mathbb{C})v$, for some $i\in T$ and $n_{i}=\dim_{\mathbb{C}}(V_{j-1}/V_j)$, where $\dim_{\mathbb{C}}(V_{j-1}/V_j)$ is the dimension of the
$\mathbb{C}$-vector space $V_{j-1}/V_j$. Indeed, the first condition occurs only if for every $v\in V_{j-1}/V_j$, $mv=0$ for all $m\in\mathbb{M}_{n_{i}}(\mathbb{C})$ and for all $i\in T$. Otherwise, we may assume that $mv\neq 0$, for some $v\in V_{j-1}/V_j$ and $m\in \mathbb{M}_{n_{i}}(\mathbb{C})$ for some $i\in T$. Thus, $\mathbb{M}_{n_{i}}(\mathbb{C}) v$ is a
left $I$-submodule of $V_{j-1}/V_j$ which is non-trivial, thus
$V_{j-1}/V_j=\mathbb{M}_{n_{i}}(\mathbb{C})v$. Therefore, $V_{j-1}/V_j$ is a simple $\mathbb{M}_{n_{i}}(\mathbb{C})$-module and by Theorem~3.3 of~\cite{Lam} $n_{i}=\dim_{\mathbb{C}}(V_{j-1}/V_j)$. Hence, putting $i_{j}=i$, if $z_{j}$ is any word such that $\theta_{i}(z_{j})=0_{i}$ we deduce
$$
  \rho(z_j)V_{j-1}/V_j=\psi(\rho(z_j))\mathbb{M}_{n_{i}}(\mathbb{C})v=\theta_{i}(z_{j})\mathbb{M}_{n_{i_j}}(\mathbb{C})v=0
$$
In case $mv=0$ for all $m\in J$, $v\in V_{j-1}/V_j$ we clearly have that also $\rho(z_j)V_{j-1}/V_j=0$ holds. Therefore, the following word
$$
u=z_{\ell}\ldots z_{1}=\prod_{j=\ell}^{1}z_{j}
$$
satisfies $\rho(u)JV=0$ since $\rho(z_j)V_{j-1}/V_j=0$ for all $j\in[1,\ell]$. Thus, $\rho(u)J=0$. Moreover, we have 
$$
\sum_{j=1}^{\ell}n_{i_{j}}=\sum_{j=1}^\ell \dim_{\mathbb{C}}(V_{j-1}/V_j)\le \dim_{\mathbb{C}}(V)=n-1
$$
\end{proof} 
\begin{theorem}\label{theo: bound for semisimple}
With the above notation. If the automaton $\mathrsfs{A}$ is semisimple, and $\mathcal{C}$ is a minimal core of $\mathrsfs{A}$, then there is a reset word $w$ with  
$$
|w|\le (n-1)\max_{i\in\mathcal{C}}\{D(2,r_{i},n)\}
$$
\end{theorem}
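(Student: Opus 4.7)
The strategy is to combine three earlier results: Lemma~\ref{lem: minimal core}, Theorem~\ref{theo: zero component} and Lemma~\ref{lem: null ideal}. First, by Lemma~\ref{lem: minimal core} the core $\mathcal{C}$ is covered by a family $\mathcal{F}=\{\supp(u_{1}), \ldots, \supp(u_{m})\}$ of minimal sections, which by Proposition~\ref{prop: empty intersection} are pairwise disjoint; hence for every $i\in\mathcal{C}$ there is a unique index $s(i)\in[1,m]$ with $i\in\supp(u_{s(i)})$.

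Next, set $I=\prod_{i\in\mathcal{C}}\mathbb{M}_{n_{i}}(\mathbb{C})$ and $J=\psi^{-1}(I)$, and apply Lemma~\ref{lem: null ideal}: this supplies a sequence $i_{1},\ldots,i_{\ell}\in\mathcal{C}$ with $\sum_{j=1}^{\ell} n_{i_{j}}\le n-1$ such that, for any choice of words $z_{j}$ with $\theta_{i_{j}}(z_{j})=0_{i_{j}}$, the concatenation $w=z_{\ell}\cdots z_{1}$ satisfies $\rho(w)J=0$. For each $j\in[1,\ell]$ I would produce $z_{j}$ by applying Theorem~\ref{theo: zero component} to the minimal word $u_{s(i_{j})}$; this annihilates every component in $\supp(u_{s(i_{j})})$, in particular the $i_{j}$-th one, and yields
\[
|z_{j}|\le \min_{i\in\supp(u_{s(i_{j})})}\{n_{i}D(2,r_{i},n)\}\le n_{i_{j}}D(2,r_{i_{j}},n).
\]

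Summing over $j$, the word $w=z_{\ell}\cdots z_{1}$ has length at most
\[
\sum_{j=1}^{\ell} n_{i_{j}}D(2,r_{i_{j}},n)\le \Bigl(\max_{i\in\mathcal{C}}D(2,r_{i},n)\Bigr)\sum_{j=1}^{\ell} n_{i_{j}}\le (n-1)\max_{i\in\mathcal{C}}D(2,r_{i},n).
\]
Since $J$ contains a preimage of each $1_{i}$ for $i\in\mathcal{C}$, the identity $\rho(w)J=0$ forces $\theta_{i}(w)=0_{i}$ for every $i\in\mathcal{C}$; because $\mathcal{C}$ is a core this propagates to all $i\in[1,k]$, and semisimplicity then yields $\rho(w)=0$, i.e.\ $w\in\Syn(\mathrsfs{A})$. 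The delicate point is the interaction between the two bounds: Theorem~\ref{theo: zero component} delivers only a minimum over a single minimal section, so it is essential that the sequence $(i_{j})$ is prescribed by Lemma~\ref{lem: null ideal} \emph{before} the $z_{j}$'s are selected; this lets us pick, for each $i_{j}$, a minimal section containing it and thereby obtain the per-index bound $n_{i_{j}}D(2,r_{i_{j}},n)$. The ordering itself, coming from the Jordan--H\"older filtration inside Lemma~\ref{lem: null ideal}, is what prevents successive $z_{j}$'s from undoing one another's annihilations.
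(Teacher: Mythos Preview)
Your proof is correct and follows essentially the same route as the paper: cover $\mathcal{C}$ by minimal sections via Lemma~\ref{lem: minimal core}, invoke Lemma~\ref{lem: null ideal} on the ideal $I=\prod_{i\in\mathcal{C}}\mathbb{M}_{n_i}(\mathbb{C})$ to obtain the sequence $(i_j)$ with $\sum n_{i_j}\le n-1$, and for each $i_j$ supply the annihilating word $z_j$ from Theorem~\ref{theo: zero component} applied to the section containing $i_j$. Your explicit introduction of the map $s(\cdot)$ and the justification that $\rho(w)J=0$ forces $\theta_i(w)=0_i$ for $i\in\mathcal{C}$ are in fact cleaner than the paper's slightly compressed notation at this step.
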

\begin{proof}
By a suitable permutation we may assume $\mathcal{C}=[1,m]$ for some $m\le k$. By Lemma \ref{lem: minimal core} there is a family 
$$
\mathcal{F}=\{\supp(u_{1}), \ldots, \supp(u_{t})\}
$$
of minimal sections covering $\mathcal{C}$. Consider the ideal pinpointed by $\mathcal{C}$
$$
I=\mathbb{M}_{n_{1}}(\mathbb{C})\times\cdots\times \mathbb{M}_{n_{m}}(\mathbb{C})
$$
By Theorem \ref{theo: zero component} for each $j\in[1,t]$ there is a word $w_{j}$ with  
$$
|w_{j}|\le \min_{i\in\supp(u_{j})}\{n_{i}D(2,r_{i},n)\}
$$
such that $\theta_{i}(w_{j})=0_{i}$ for all $i\in \supp(u_{j})$. Thus, by Lemma \ref{lem: null ideal} we may find a sequence of integers $i_{j}\in [1, t]$, $j\in[1,\ell]$ such that the following word
$$
u=\prod_{j=\ell}^{1} w_{i_{j}}
$$
satisfies the property $\rho(u)J=0$ for $J=\psi^{-1}(I)$. Hence, $\theta_{i}(u)=0_{i}$ for all $i\in [1,m]$, and since $\mathcal{C}$ is a core and $\mathrsfs{A}$ is semisimple we conclude that $u\in\Rad(\mathrsfs{A} )=\Syn(\mathrsfs{A})$. Moreover, we get
$$
|u|\le \max_{i\in\mathcal{C}}\{D(2,r_{i},n)\}\sum_{j=1}^{\ell}n_{i_{j}}\le (n-1)\max_{i\in\mathcal{C}}\{D(2,r_{i},n)\}
$$
\end{proof}
From the previous theorem we immediately obtain our main result. 
\begin{theorem}\label{theo: main theo}
If an automaton $\mathrsfs{A}$ with $n$ states is semisimple with former-rank $r=\Fr(\mathrsfs{A})$, then there is a reset word of length at most $(n-1)D(2,r,n)$. In particular, we have that there is a reset word of length at most
$$
\frac{n(n-1)^{2}}{r(r-1)}
$$
\end{theorem}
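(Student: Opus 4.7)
The plan is to deduce this statement as a direct corollary of Theorem \ref{theo: bound for semisimple} combined with Proposition \ref{prop: former rank} and the standard packing-number estimate (\ref{eq: packing bound}).

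First I would invoke Theorem \ref{theo: bound for semisimple} to fix a minimal core $\mathcal{C}$ and obtain a reset word $w$ with $|w|\le (n-1)\max_{i\in\mathcal{C}}\{D(2,r_{i},n)\}$, where $r_i=\Rnk(\mathcal{I}_i)$. The goal is then to bound $\max_{i\in\mathcal{C}}\{D(2,r_{i},n)\}$ from above by $D(2,r,n)$, where $r=\Fr(\mathrsfs{A})$. Since $\mathrsfs{A}$ is semisimple, the condition $\Fs(\mathrsfs{A})\setminus\Rad(\mathrsfs{A})\neq\emptyset$ of Proposition \ref{prop: former rank} is automatically satisfied (as noted immediately after that proposition), hence
$$
r=\Fr(\mathrsfs{A})=\min\{\Rnk(\mathcal{I}_i):i\in[1,k]\}\le r_i
$$
for every $i\in[1,k]$, and in particular for every $i\in\mathcal{C}$.

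Next I would appeal to the monotonicity property of the packing number recalled in the paper, namely that $D(2,r_1,n)\ge D(2,r_2,n)$ whenever $r_1\le r_2$. Applied to $r\le r_i$ this yields $D(2,r_i,n)\le D(2,r,n)$ for all $i\in\mathcal{C}$, so that
$$
\max_{i\in\mathcal{C}}\{D(2,r_{i},n)\}\le D(2,r,n).
$$
Substituting this into the bound from Theorem \ref{theo: bound for semisimple} gives $|w|\le (n-1)D(2,r,n)$, which is the first claim.

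For the explicit numerical estimate I would use the standard upper bound (\ref{eq: packing bound}) for the $2$-packing number:
$$
D(2,r,n)\le \frac{\binom{n}{2}}{\binom{r}{2}}=\frac{n(n-1)}{r(r-1)}.
$$
Multiplying by $(n-1)$ yields $(n-1)D(2,r,n)\le \frac{n(n-1)^{2}}{r(r-1)}$, completing the proof. There is essentially no obstacle here: the proof is a direct concatenation of Theorem \ref{theo: bound for semisimple}, the equality part of Proposition \ref{prop: former rank} in the semisimple case, the monotonicity of $D(2,\cdot,n)$, and the Fisher-type bound (\ref{eq: packing bound}); the real work has already been done in establishing Theorem \ref{theo: bound for semisimple}.
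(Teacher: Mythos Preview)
Your proposal is correct and follows essentially the same approach as the paper's own proof: invoke Theorem~\ref{theo: bound for semisimple}, use Proposition~\ref{prop: former rank} (with the semisimple hypothesis ensuring $\Fs(\mathrsfs{A})\setminus\Rad(\mathrsfs{A})\neq\emptyset$) to identify $r=\Fr(\mathrsfs{A})=\min_i r_i$, apply the monotonicity $D(2,r_i,n)\le D(2,r,n)$, and finish with the bound~(\ref{eq: packing bound}). If anything, you are slightly more explicit than the paper in justifying why the equality case of Proposition~\ref{prop: former rank} applies.
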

\begin{proof}
From Theorem \ref{theo: bound for semisimple} we deduce that there is a reset word of length at most
$$
(n-1)\max_{i\in\mathcal{C}}\{D(2,r_{i},n)\}
$$
Let $r=\min\{r_{i}, i\in [1,k]\}$. By Proposition \ref{prop: former rank} we have $r=\Fr(\mathrsfs{A})$. Moreover, since $D(2,r_{i},n)\le D(2,r,n)$ for all $i\in [1,k]$ we may conclude that there is a reset word of length at most $(n-1)D(2,r,n)$. By the upper bound (\ref{eq: packing bound}) we immediately get the bound in the statement. 
\end{proof}

\section{Conclusion and open problems}
The bound $n(n-1)^{2}/r(r-1)$ of Theorem \ref{theo: main theo} is already better than the Pin-Frankl's bound for $r\ge 3$, but not asymptotically better than Szykula's $O(114n^{3}/685)$. However, it starts to be better already for $r\ge 4$. Moreover, $n(n-1)^{2}/r(r-1)$ is a straightforward upper bound for $ D(2,r,n)$ although they are asymptotic \cite{HandComb}. There is a slightly more precise bound for $ D(2,r,n)$ and many others for specific choices of the parameters, we remind the reader to \cite[Section 14]{HandComb} for further details. For instance, for ``small'' $n$ 
$$
D(2,r,n)\le \frac{(r-1)n}{r^{2}-n}
$$
holds provided that the denominator is positive. Thus, if $r^{2}-n\ge 0$ we may conclude that there is a reset word of length $(n-1)n(r-1)/(r^{2}-n)$. 
\\
Semisimplicity is just used in Theorem \ref{theo: zero component}. Hopefully, using similar techniques, it may possible to extend the main result to a general synchronizing automaton:
\begin{prob}
For a general synchronizing automaton $\mathrsfs{A}$ with former-rank $r$, prove that there is a radical word of length at most $(n-1)D(2,r,n)$.
\end{prob}
Even though the previous open problem would be solved, the crucial case remains that of former-rank two. Indeed, in this case $D(2,2,n)=n(n-1)/2$ and this gives rise to a non interesting upper bound for the shortest reset word. However, this case suggests that the automata that are more difficult to synchronize, are the ones having former-rank two. Therefore, the following direction of research seems important in understanding how to crack Cerny's conjecture. 
\begin{prob}
What is the structure of $i$-th factors $\mathcal{M}_{i}$ and their unique $0$-minimal ideals $\mathcal{I}_{i}$ in case of the former-rank two? 
\end{prob}
In Proposition \ref{prop: packing number} it is used the fact that each pair is covered at most once by  the family $\mathcal{F}=\{F_{1}, \ldots, F_{\ell}\}$ of $r$-sets, from which we deduced $\ell\le D(2,r,n)$.  
However, there is an action of $\Sigma^{*}$ on the family $\mathcal{F}$ therefore we may state the following ``dynamical packing problem'' in the hope to have a better upper bound. 
\begin{prob}
Find the maximum size $Dd(t,r,n,k)$ of a collection $\mathcal{F}=\{F_{1}, \ldots, F_{m}\}$ of $r$-subsets of $[1,n]$ such that no $t$-subset is covered more than once and with the property that the alphabet $[1,k]$ acts partially on $\mathcal{F}$, and this action is transitive.
\end{prob}
For instance using the Cerny's series $\mathrsfs{C}_{n}$, it is not difficult to check that in case $k=r=t=2$ we have $Dd(2,2,n,2)=D(2,2,n)={n \choose t}/{r \choose t}$.


\begin{thebibliography}{99}

\bibitem{AMSV09} Almeida, J., Margolis, S., Steinberg, B., Volkov, M.V.: \emph{Representation theory of finite semigroups, semigroup radicals and formal language theory}, Trans. Amer. Math. Soc., 361(3), (2009), pp. 1429--1461.

\bibitem{AR16} Almeida, J., Rodaro, E. ``\emph{Semisimple synchronizing automata and the Wedderburn-Artin theory}''  Int. J. Found. Comput. Sci., 27(2), (2016), pp. 127--145.

\bibitem{AS09} Almeida, J., Steinberg, B. \emph{Matrix Mortality and the \v{C}ern\'{y}-Pin Conjecture}, In: DLT 2009, Vol 5583 of Lect. Notes Comput. Sci. Springer Berlin Heidelberg, (2009), pp. 67--80.

\bibitem{AnVo} Ananichev, D. S., Volkov, M. V.: Synchronizing generalized monotonic automata. Theoret. Comput. Sci., 330(1), (2005), pp. 3--13.

\bibitem{AVG12} Ananichev, D. S., Volkov, M. V., Gusev, V.V.: \emph{Primitive digraphs with large exponents and slowly synchronizing automata}, Zap. Nauchn. Sem. S.-Peterburg. Otdel. Mat. Inst. Steklov. (POMI) 402, no. Kombinatorika i Teoriya Grafov. IV, (2012), pp. 9--39.

\bibitem{AS07} Arnold, F., Steinberg, B.: \emph{Synchronizing groups and automata}. Theor. Comput. Sci., 359(1-3), (2006), pp. 101--110.

\bibitem{Bab} I. Babcs\'anyi, \emph{Automata with Finite Congruence Lattices}, Acta Cybernet., Vol. 18(1), (2007), pp. 155--165.

\bibitem{BeBePe} B\'{e}al, M.-P., Berlinkov, M. V.,  Perrin D.: \emph{A quadratic upper bound on the size of a synchronizing word in one-cluster automata}. Int. J. Found. Comput. S., 22(2), (2011), pp. 277--288.

\bibitem{BeSz} Berlinkov, M., Szykula, M.: \emph{Algebraic synchronization criterion and computing reset words}. Inform. Sciences, 369, (2016), pp. 718--730.

\bibitem{BePeReu} Berstel, J., Perrin, D., Reutenauer, C.: \emph{Codes and automata}, Encyclopedia of mathematics and its applications 129. Cambridge University Press, (2010).

\bibitem{HandComb} Brouwer, A. E.: \emph{Block Designs}, Chapter 14 in: Handbook of Combinatorics, R. Graham, M. Gr\"otschel and L. Lov\'asz Eds., Elsevier, Amsterdam, 1995.

\bibitem{Ce64} J. \v{C}ern\'{y}, \emph{Pozn\'{a}mka k homog\'{e}nnym eksperimentom s kone\v{c}n\'{y}mi automatami [in Slovak]}, Mat.-Fyz.\ \v{C}as.\ Slovensk.\ Akad.\ Vied., 14 (1964), pp. 208--216.

\bibitem{Dubuc} Dubuc, L.: \emph{Sur les automates circulaires et la conjecture de Cerny}. Informatique th\'{e}orique et applications, 32, (1998), pp. 21--34,  [in French]

\bibitem{GreKi} Grech, M., Kisielewicz, A.: \emph{The Cerny conjecture for automata respecting intervals of a directed graph}. Discret. Math. Theor. C., 15(3), (2013), pp. 61--72.

\bibitem{hop} Hopcroft, J.E., Ullman, J.D.:\emph{Introduction to automata theory, languages, and computation}. Addison Wesley, (1979).

\bibitem{Kari} Kari, J.: \emph{Synchronizing finite automata on Eulerian digraphs}. Theor. Comput. Sci, 295(1-3), (2003,) pp.223--232.

\bibitem{Lam} Lam, T.Y.: \emph{A first course in noncommutative rings}, Springer.

\bibitem{Perrin}
Perrin, D.: \emph{Finite automata. Handbook of theoretical computer science}, J. van Leewen (eds.), Elsevier, B., (1990), pp. 1--57.

\bibitem{Pin}
Pin, J.-E.: \emph{On two combinatorial problems arising from automata theory}. Ann. Discrete Math, 17, (1983), pp. 535--548.

\bibitem{Ry97} Rystsov, I.K.: \emph{Reset words for commutative and solvable automata}, Theor. Comp. Sci., 172(1-2), (1997), pp. 273--279.

\bibitem{Ry15} Rystsov, I.K.: \emph{Primitive and Irreducible Automata}, Cybern. Syst. Anal., 51(4)
(2015), pp. 506--513. 

\bibitem{Ste11} Steinberg, B.: \emph{The \v{C}ern\'{y} conjecture for one-cluster automata with prime length cycle}, Theor. Comp. Sci., 412(39), (2011), pp. 5487--5491.

\bibitem{Szy17} Szykula, M.: Improving the upper bound on the length of the shortest reset words, \url{https://arxiv.org/pdf/1702.05455.pdf} (2017)

\bibitem{Thie} Thierrin, G.:\emph{Simple automata}, Kybernetika, 6(5), (1970), pp. 343--350.

\bibitem{Trah} Trahtman, A. N.:\emph{The Cerny conjecture for aperiodic automata}. Discrete Math. Theor. C., 9(2), (2007), pp. 3--10.

\bibitem{Vo_Survey} Volkov, M.V.: \emph{Synchronizing automata and the \v{C}ern\'y conjecture}, In: LATA 2008, Vol 5196 of Lect. Notes Comput. Sci. Springer Berlin Heidelberg, (2008), pp. 11--27.

\bibitem{VoChain} Volkov, M.V.: \emph{Synchronizing automata preserving a chain of partial orders}, Theor. Comp. Sci. 410, (2009), pp. 3513--3519.











\end{thebibliography}
\end{document}